\newcommand{\tf}{t_{f}}
\def\<#1>{\langle{#1}\rangle}
\long\def\comment#1{}
\def\bA{\bm{A}}
\def\bx{\bm{x}}
\DeclarePairedDelimiter\floor{\lfloor}{\rfloor}
\newtheorem{proposition}{Proposition}[section]
\begin{document}
%\title{Multifidelity ST-MCMC for Bayesian parameter inference and model comparison from discrete single-cell measurements}
\title{Bayesian inference of Stochastic reaction networks using Multifidelity Sequential Tempered Markov Chain Monte Carlo}
%For at least  authors with different addresses, use instead the following commands
\corrauthor[1]{Thomas A. Catanach}
\author[2]{Huy D. Vo}
\author[2]{Brian Munsky}
\corremail{tacatan@sandia.gov}
\address[1]{Sandia National Laboratories, Livermore, CA}
\address[2]{Dept. of Chemical and Biological Engineering, Colorado State University, Fort Collins, CO}

\abstract{Stochastic reaction network models are often used to explain and predict the dynamics of gene regulation in single cells. These models usually involve several parameters, such as the kinetic rates of chemical reactions, that are not directly measurable and must be inferred from experimental data.
 Bayesian inference provides a rigorous probabilistic framework for identifying these parameters by finding a posterior parameter distribution that captures their uncertainty.
 Traditional computational methods for solving inference problems such as Markov Chain Monte Carlo methods based on classical Metropolis-Hastings algorithm involve numerous serial evaluations of the likelihood function, which in turn requires expensive forward solutions of the chemical master equation (CME).
 We propose an alternative approach based on a multifidelity extension of the Sequential Tempered Markov Chain Monte Carlo (ST-MCMC) sampler.
 This algorithm is built upon Sequential Monte Carlo and solves the Bayesian inference problem by decomposing it into a sequence of efficiently solved subproblems that gradually increase model fidelity and the influence of the observed data.
 We reformulate the finite state projection (FSP) algorithm, a well-known method for solving the CME, to produce a hierarchy of surrogate master equations to be used in this multifidelity scheme.
 To determine the appropriate fidelity, we introduce a novel information-theoretic criteria that seeks to extract the most information about the ultimate Bayesian posterior from each model in the hierarchy without inducing significant bias.
 This novel sampling scheme is tested with high performance computing resources using biologically relevant problems.
}

\maketitle

\section{Introduction}
A distinguishing feature of biology is the diversity manifested by living things across different scales, from the readily observed multitude of species to the differences between individuals of the same species.
At the microscopic level, a population of cells with the same genetic code, growing under the same lab conditions, could still display phenotypic variability in gene products~\cite{Elowitz2000, Munsky2009, Neuert2013a, Singh2014, Bruggeman2017}.
Phenotypic variability has been observed in an increasing volume of data obtained from single-cell, single-molecule measurements enabled by recent progresses in chemical labeling and imaging techniques~\cite{Raj2008, Munsky2015, Li2019}.

Much of the variability in gene expression is attributed to the stochasticity of vital cellular processes (e.g., transcription, translation) that are subjected to the randomness of molecular interactions.
Stochastic reaction networks (SRN) represent a class of models that have been widely used to capture  temporal and spatial fluctuations in single-cell gene expression~\cite{Gillespie1992}.
SRN models treat the copy numbers of biochemical species, i.e. the number of molecules of a given type within a cell, as states in a discrete-space, continuous-time Markov process, where chemical reactions are represented by transitions between states.
Given an SRN model, the probabilities of gene expression states within a cell can be computed by solving the chemical master equation (CME). This is a dynamical system in an infinite-dimensional space that describes the evolution of the probability distribution of all states.
The finite state projection (FSP) is a well-known approximation method to obtain high-fidelity solutions of the CME~\cite{Munsky2006}.
This method reduces the intractable state space of the original SRN into a finite subset chosen based on a proven error bound, turning the infinite-dimensional CME system into a finite problem of linear differential equations.

The present work is concerned with the selection, parameter estimation, and uncertainty propagation of these reaction network models within the Bayesian framework.
 Bayesian methods are a powerful tool for system identification for SRN models because they provide rigorous uncertainty quantification by identifying a probability distribution over plausible model parameters instead of selecting a single model that may fit the data well~\cite{Gomez-Schiavon2017, Schnoerr2017, Tiberi2018, Weber2018, Ting2019}.
 This distribution over the models given the data is called the posterior distribution.
 Quantifying model parameter uncertainty is critical because it is difficult to model the full complexity of the biological system and the biological system may exhibit experimental context dependence~\cite{Catanach2018Context}.
 Further, once model parameter uncertainty has been quantified, further experiments can be designed to provide new information about the system~\cite{Lindley2007, Huan2013, Ruess2013, Fox2019, Busetto2013}.

In this paper, we focus on data obtained by the experimental technique of single-molecule fluorescent in situ hybridization (smFISH). These datasets consist of independent single-cell measurements, each of which measures the copy number of biochemical species at a single time point.
The standard approach to sample from the posterior distribution implied by this data is to use Markov Chain Monte Carlo (MCMC) algorithms such as the random walk Metropolis-Hastings MCMC sampler~\cite{Gomez-Schiavon2017, Vo2019a}.
With high-fidelity CME solutions enabled by the FSP, one can compute the likelihood of observing these single-cell data and then perform Bayesian inference for model parameters.
However, this approach suffers from two drawbacks. The first drawback is that the MCMC is inherently serial, preventing it from utilizing the massively parallel processing capability provided by modern high performance computing clusters. Therefore, if standard MCMC techniques are used, several tens to hundreds of thousands of sequential model evaluations may be needed to adequately sample the posterior distribution. The second drawback is that the FSP solutions required for the likelihood function are expensive, often requiring several minutes per model evaluation for a moderately sized problem. Typically, the number of differential equations that the FSP algorithm needs to solve grows exponentially with the number of species in the network, so the size of the state space and transition matrix quickly grows intractable.
This has motivated Approximate Bayesian Computation (ABC) approaches that replace the computationally expensive single-cell likelihood function with less expensive model-data discrepancy functions~\cite{Wu2014, Prescott2018, Warne2019}.
Samples produced by ABC are, in general, not distributed according to the true posterior distribution, and a careful choice of summary statistics is critical for the performance and reliability of ABC samplers.

We propose in this work a different approach that uses a parallel and multifidelity MCMC computational framework to produce samples from the true posterior distribution. Many approaches to parallel MCMC methods have been proposed either based on parallel proposals or parallel Markov chains \cite{rosenthal2000parallel, Liu2000, ChingChen2007, Wu2017}. One family of popular parallel MCMC methods are those based upon sequential Monte Carlo (SMC) samplers~\cite{Moral2006, ChingChen2007, Kantas2014, catanach2018bayesian}. For this work, we replace the standard MCMC methods with the Sequentially Tempered Markov Chain Monte Carlo (ST-MCMC)~\cite{catanach2018bayesian}, which is a massively parallel sampling scheme based on SMC. This method transports a population of model parameter samples through a series of intermediate annealing levels that reflect the gradual increase in the influence of the data likelihood. At each level, MCMC is used to explore the intermediate distribution and re-balance the distribution of the samples. Since this method is population based, these MCMC steps can be done in parallel. Further, this method can effectively adapt to the target posterior distribution to speed up sampling.

Similarly, approaches to multifidelity MCMC have been explored in the literature such as multifidelity delayed acceptance schemes, Multilevel Markov Chain Monte Carlo, and multifidelity approaches to SMC~\cite{AndresChristen2005, Efendiev2006, Cui2015, Koutsourelakis2009, Dodwell2015, Latz2018}. Multifidelity delayed acceptance schemes have been applied to Bayesian inference for the CME before~\cite{Golightly2015, Sherlock2016a, Vo2019a}. Within these methods, a fast surrogate of the expensive likelihood function is used to pre-screen proposed samples within MCMC before they are accepted or rejected based on the expensive CME likelihood. This method still requires many sequential full model evaluations in order to sample the posterior. Multilevel MCMC~\cite{Dodwell2015} uses a hierarchy of models, such as different discretization grids of a PDE, to design an estimator for a specific quantify of interest. This uses parallel Markov chains at different model fidelities to estimate a correction to the quantify of interest estimate incurred by refining the model fidelity. Multifdelity SMC methods like the Multilevel Sequential$^2$ Monte Carlo sampler~\cite{Latz2018} use an embarrassingly parallel approach and a hierarchy of multifidelity models. Therefore, only a few sequential full model evaluations may be needed. We take this approach to develop a multifidelity form of ST-MCMC for solving the CME.  Within our method, instead of only considering a series of annealing levels, we also consider a hierarchy of model fidelity.  Thus, the solution of the full inference problem is broken down by steering the samples from a distribution that reflects a low fidelity model with little influence from the likelihood to a distribution that reflects the high-fidelity model with the full influence of the likelihood. By performing the early updates using fast models, the sampler can quickly converge to the most important regions of parameter space, where a higher fidelity model can then be used to better assess which regions of this high-likelihood space are most likely. The key challenge for applying multifidelity methods in the SMC context is deciding which annealing factor and model fidelity is appropriate at a given level. Latz et al. suggest an approach based on the effective sample size of the population~\cite{Latz2018}. We take a different approach by leveraging a limited number of high fidelity model solves to estimate the information gained about the ultimate posterior given a current model fidelity and annealing factor. This information theoretic criteria can effectively identify when the lower fidelity model is overly biasing the solution and should be discarded in favor of a higher fidelity model.

We demonstrate the efficiency and accuracy of our novel scheme when solving parameter estimation, model selection, and uncertainty propagation for stochastic chemical kinetic models in a Bayesian framework. This new approach to multifidelity ST-MCMC using fast surrogates from reduced order models based on a novel reformulation of the FSP significantly reduces the number of expensive likelihood function required to sample the posterior. The example problems are based on models from the system and synthetic biology literature. These include a three-dimensional repressilator gene circuit, a spatial bursting gene expression network, and a stochastic transcription network for the inflammation response gene IL1beta~\cite{Kalb2019}. As such the primary contributions of this paper are:
\begin{enumerate}
\item Development of multifidelity ST-MCMC for Bayesian inference of SRNs
\item Introduction of an information-theoretic criteria for assessing the appropriate model fidelity within multifidleity ST-MCMC
\item Description of a novel surrogate model of the CME to be used within multifidelity inference problems
\end{enumerate}

This paper is organized as follows: Section~\ref{sec:background} describes general background for stochastic reaction network modeling, finite state projection, Bayesian inference, and MCMC methods. Second~\ref{sec:multifi_stmcmc} describes the  Multifidelity Sequential Tempered MCMC algorithm and the information theoretic criteria for adapting model fidelity. Section~\ref{sec:cme_surrogate} describes a novel surrogate models for the CME. Section~\ref{sec:numerical} describes three experiments to identify model parameters in SRNs using Multifidelity ST-MCMC. Finally, Section~\ref{sec:conclusion} concludes.

\section{Background}
\label{sec:background}
\subsection{Stochastic reaction networks for modeling gene expression}
A reaction network consists of $N$ different chemical species $S_1, \ldots, S_N$
that are interacting via the following $M$ chemical reactions
\begin{equation}
  \label{eq:reaction}
  \nu_{1,j}^{react}S_1 + \ldots + \nu_{N,j}^{react}S_N
  \longrightarrow
  \nu_{1,j}^{prod}S_1 + \ldots + \nu_{N,j}^{prod}S_N
  .
\end{equation}
We are interested in keeping track of the integral vectors
$\bm{x} \equiv (x_1, \dots, x_N )^T$, where $x_i$ is the
population
of the $i$th species.
Assuming constant temperature and volume, the time-evolution of this system can be modeled by a continuous-time, discrete-space Markov process~\cite{Gillespie1992}.
The $j$th reaction channel is associated with a stoichiometric vector
$\bm{\nu}_j = (\nu_{1,j}^{prod} - \nu_{1,j}^{react}, \ldots, \nu_{N,j}^{prod} - \nu_{N,j}^{react})^T$ ($j=1,\ldots,M$) such that, if the
system is in state $\bm{x}$ and reaction $j$ occurs, the system
transitions to state $\bm{x}+\bm{\nu}_j$.
Given $\bm{x}(t)=\bm{x}$, the propensity $\alpha_j(\bm{x}; \theta)dt$ determines the probability that reaction $j$ occurs in the next infinitesimal time
interval $[t,t+dt)$, where $\theta$ is the vector of model parameters. In other words,
$$\operatorname{Prob}(\bm{x}(t + dt) = \bm{x} + \bm{\nu}_j | \bm{x}(t) = \bm{x}) = \alpha_j(\bm{x};\theta)dt.$$

An important case of reaction networks are those that follow mass-action kinetics, whose propensity functions take the form
\begin{equation}
\label{eq:propensity_mass_action}
\alpha_j(\bm{x}; \theta)
=
c_j(\theta)
\binom{x_1}{\nu_{1,j}^{react}} \cdot \ldots \cdot \binom{x_N}{\nu_{N,j}^{react}}.
\end{equation}
In this formulation, $c_j(\theta)$ is the probability per unit time that a combination of molecules can react via reaction $j$, and the remaining factor is the number of ways the existing molecules can be combined to form the left side of the chemical equation~\eqref{eq:reaction}.

 The time-evolution of the probability distribution of this  Markov process is the solution of the linear system of differential equations known as the chemical master equation (CME)
\begin{equation}
\label{eq:cme_ode_form}
\begin{cases}
\frac{d}{dt}\bm{p}(t)=\bm{A(\theta)}\bm{p}(t),\quad t\in[0,\,t_f]\\
\bm{p}(0)=\bm{p}_0
\end{cases},
\end{equation}
where $\bm{p}(t)$ is the time-dependent probability distribution of all states, $p(t, \bx)=\textrm{Prob}\{\bm{x}(t)=\bm{x}_i \vert \bm{x}(0)\}$.
The initial distribution $\bm{p}_0$ is assumed to be given, and $\bm{A}(\theta)$ is the infinitesimal generator of the Markov process, defined entry-wise as
\begin{equation}
  \label{eq:cme_matrix}
  \bm{A}(\bm{y}, \bm{x}; \theta)
  =
  \begin{cases}
    \alpha_j(\bm{x}; \theta) \; \text{ if } \bm{y} = \bm{x} + \bm{\nu}_j \\
    -\sum_{j=1}^{M}{\alpha_j(\bm{x}; \theta)} \; \text{ if } \bm{y} = \bm{x} \\
    0 \text{ otherwise}
  \end{cases}.
\end{equation}
Here, we have made explicit the dependence of $\bA$ on the model parameter vector $\theta$, which we need to infer from experimental data.
\subsection{The finite state projection}
\label{sec:fsp_background}
Typically, reaction networks model open biochemical systems, where the set of all possible molecular states is unbounded. This makes the CME an infinite-dimensional linear system of ODEs.
The finite state projection (FSP) is a well-known strategy to systematically reduce this linear system into a finite surrogate model with a strict error bound.

The FSP can be thought of as a special class of projection-based model reduction applied to the CME. Specifically, let $\Omega$ be a finite subset of the CME state space. The projection of the CME operator $\bm{A}$ onto the subspace spanned by the point-mass measures $\{ \delta_{\bm{x}} \vert \bm{x} \in \Omega\}$ is given by
\begin{equation}
  \widetilde{\bm{A}}_{\Omega}(\bm{y}, \bm{x}) =
  \begin{cases}
    \bm{A}(\bm{y}, \bm{x}) \; \text{if } \bm{x}, \bm{y} \in \Omega \\
    0 \; \text{otherwise}
  \end{cases}
  .
\end{equation}

We can then define a reduced model of the dynamical system~\eqref{eq:cme_ode_form} based on this projection as
\begin{equation}
  \label{eq:fsp_system}
  \frac{d}{dt}{\widetilde{\bm{p}}_{\Omega}}(t)
  =
  \widetilde{\bm{A}}_{\Omega}\widetilde{\bm{p}}_{\Omega}(t),
  \;
  t \in [0, \tf]
  .
\end{equation}
Clearly, in solving~\eqref{eq:fsp_system} we only need to keep track of the equations corresponding to states in the finite set $\Omega$, which is amenable to numerical treatments.

In contrast to generic projection methods, the gap between the reduced-order model and the true CME can be computed for the FSP. Indeed, Munsky and Khammash ~\cite[Theorem 2]{Munsky2008} proved that the truncation error can be quantified in $\ell_1$ norm as
\begin{equation}
  \label{eq:fsp_error_bound}
  \|\bm{p}(t) - \widetilde{\bm{p}}_{\Omega}(t)\|_{1}
  =
  1 - \sum_{\bm{x} \in \Omega}{\tilde{p}_{\Omega}(t, \bm{x})}.
\end{equation}
Clearly, the right hand side can be readily computed from the solution of the reduced system~\eqref{eq:fsp_system}.
From this precise error quantification, we have effective iterative method for solving the CME. Choosing an error tolerance $\varepsilon>0$, starting from any initial set $\Omega:=\Omega_{0}$, we solve system~\eqref{eq:fsp_system} and check that the right hand side of~\eqref{eq:fsp_error_bound} is less than $\varepsilon$. If this fails, we add more states to $\Omega_0$ to get a strictly larger set $\Omega_1$ and repeat the procedure until we find an approximation that satisfies our error tolerance.

As the sequence of subsets $\Omega_i$ grows until it eventually covers the whole state space, we might expect that the finite-time solution of~\eqref{eq:fsp_system} will likewise converge to the true solution. This is indeed the case for all models in practice, with only a few theoretical counterexamples in which the Markov chain is explosive~\cite{Munsky2006}. Sufficient conditions for the convergence of the FSP can be checked based on the form of the propensity function~\cite{Gauckler2014, Gupta2014}. In practice, reaction networks tend to have only reactions between two or fewer molecules, with propensity functions in mass-action form~\eqref{eq:propensity_mass_action}, and these are guaranteed to be approximable with the FSP~\cite{Gauckler2014}.

For the rest of the paper, we only concern ourselves with non-explosive SRNs where the FSP converges.
Given such models, any exhaustive sequence of subsets $\{\Omega_j\}$ suffices to guarantee that the FSP solution eventually satisfies any prespecified error tolerance.
However, some choices are more effecient than other, and it is also more advantageous to partition the time interval into smaller timesteps and use a smaller $\Omega_j$ on each step~\cite{Munsky2007}.
We will return to these observations in section~\ref{sec:adaptive_fsp}.

A final point to make about the FSP is that, if the sequence of reduced sets $\Omega_j$ are increasing, that is, $\Omega_j \subset \Omega_{j+1}, j=1,2,\ldots$, the resulting truncation error has been shown to decrease monotonically~\cite[Theorem 1]{Munsky2008} (also see~\cite[Theorem 2.5]{Kuntz2019}). This gives us a natural way to form a hierarchy of reduced models for the CME, and we will return to this in section~\ref{sec:surrogate_cme}. With the computed distribution of the SRN in our hands, we can now match them directly to experimental data using a straightforward likelihood function.
\subsection{Bayesian inference of SRN models from discrete single-cell measurements}
We are interested in inferring the parameters for the reaction network from discrete, single-cell datasets~\cite{Femino1998,Raj2008, Neuert2013a, Munsky2015, Li2019} that consist of several snapshots of many independent cells taken at discrete times $t_1,\ldots, t_T$. The snapshot at time $t_i$ records gene expression in $n_i$ cells, each of which can be collected in the data vector $\bm{c}_{j,i}, \; j = 1,\ldots, n_i$ of molecular populations in cell $j$ at time $t_i$.

Assume that a model class $\mathcal{M} = \left\{M(\theta)\right\}_{\theta \in \Theta}$ of stochastic reaction networks has been chosen to model the data consisting  of a fixed set of reactions with unknown reaction parameters $\theta$.
Let $p(t, \bm{x} | M(\theta))$ denote the entry of the CME solution corresponding to state $\bx$ at time $t$, given by SRN model $M(\theta)$.
The log-likelihood of the dataset $\mathcal{D}$ given $M(\theta)$ is given by
\begin{equation}
  \label{eq:full_logl}
  {L}(\mathcal{D} \vert \theta)
  =
  \sum_{i=1}^{T}
  \sum_{j=1}^{n_i}
  {\log{p(t_i, \bm{c}_{j,i} | M(\theta))}}.
\end{equation}

A common approach to fitting this model is find model parameters $\theta$ that maximize the log-likelihood function. These parameters would be those that best fit the data, but this process does not capture any uncertainty about those parameters. There are potentially many other models that could fit the data approximately as well as the maximum likelihood model. In contrast, with a Bayesian approach we seek to quantify this parameter uncertainty so that we can be confident in the inference results, understand the influence of this uncertainty on future predictions, and design future experiments to reduce the parameter uncertainty.

Bayesian inference is rooted in the Bayesian philosophy of probability in which our uncertainty about the world is modeled using probability distributions~\cite{jaynes2003probability}. As information becomes available, we update these distributions to reflect our new state of understanding about the world. Therefore, for Bayesian inference we begin with a prior distribution, $p \left ( \theta \right )$, that captures our initial beliefs about model parameters. Then after data has been observed, the likelihood of the data given a model class and associated parameters can be found as $p \left (\mathcal{D} \mid \theta \right ) = \exp({L}(\mathcal{D} \vert \theta))$. By applying Bayes' Theorem, we can combine the prior and likelihood information together to construct the posterior distribution on model parameters that reflects our updated beliefs after data has been observed:

\begin{equation}
  \label{eq:Bayes}
  p \left ( \theta \mid \mathcal{D} \right ) = \frac{p \left (\mathcal{D} \mid \theta \right )p \left ( \theta \right )}{p \left ( \mathcal{D} \right )}
\end{equation}

Here, $p \left ( \mathcal{D} \right )$ is a normalization constant known as the model evidence. This is relevant for Bayesian model selection problems but is not required for parameter calibration. When $p \left ( \theta \right )$ is a constant, the parameters that maximize the posterior density are equivalent to the maximum likelihood estimator. Solving the Bayesian inference problem to identify parameters allows us to quantify our uncertainty regarding the accuracy of the parameter fit by sampling from the posterior distribution. However, it can be a computationally challenging problem as discussed in the next subsection.

The Bayesian framework also provides a criteria to select or weigh different model classes as data become available. Suppose, instead of a single model class, we are given $K$ possible network structures that could potentially explain the observations. Let $\mathcal{M}^k = \{M^k(\theta^k)\}_{\theta^k \in \Theta^k}$ denote the $k$-th class, where the parameter domains $\Theta^k$ need not have the same dimensionality. Each model class is associated with a prior weight $P(\mathcal{M}^k)$ that represents the prior level of belief in each class. If $\mathcal{D}$ denotes the dataset as before, we can compute the model evidence of $\mathcal{M}^k$ as
\begin{equation}
  P(\mathcal{D} | \mathcal{M}^k)
  =
  \int_{\theta \in \Theta^k}
  {
  p \left (\mathcal{D} \mid \theta,   \mathcal{M}^k \right )P(\theta \mid \mathcal{M}^k)
  \operatorname{d\theta}
  }.
\end{equation}

\noindent The posterior probability of each model class can then be computed by applying Bayes' Theorem:

\begin{equation}
P \left (\mathcal{M}^k \mid  \mathcal{D}\right)  = \frac{P(\mathcal{D} | \mathcal{M}^k) P(\mathcal{M}^k)}{\sum_{j=1}^K P(\mathcal{D} | \mathcal{M}^j) P(\mathcal{M}^j)}
\end{equation}

These probabilities then reflect the posterior weighting of the different model classes that can be used to make average predictions over the models. Similarly, the Bayes factors
\begin{equation}
  \frac{P \left (\mathcal{M}^k \mid  \mathcal{D}\right)}
  {P \left (\mathcal{M}^j \mid  \mathcal{D}\right)},
  j, k = 1, \ldots, K, \; j \neq k
\end{equation}
allow us to rank and compare the models based on how strongly they are supported by the current data for Bayesian model selection. The drawback of Bayesian model selection in the context of stochastic gene expression is the computational cost of computing the model evidences.
We will return to this issue in section~\ref{sec:multifi_stmcmc} where we show how the Mutlfidelity ST-MCMC framework provides an efficient way to estimate these evidences.

\subsection{Markov Chain Monte Carlo samplers}
\label{sec:MCMCTheory}
Markov Chain Monte Carlo algorithms are widely used for sampling the posterior distribution of a Bayesian inference problem. These methods design a Markov chain whose stationary distribution is the target posterior distribution, $p \left ( \theta \mid \mathcal{D} \right )$. Therefore, by simulating the evolution of samples, $\theta$, according to the Markov chain, samples are asymptotically drawn from the posterior distribution. However, unlike Monte Carlo sampling, these samples are correlated. A common MCMC method is the Metropolis-Hastings algorithm. The algorithm begins by initializing the parameter state to some $\theta_0$. Then at a step $i+1$ in the evolution, a candidate sample $\theta^\prime$ is drawn according to a proposal distribution $Q \left (\theta^\prime \mid \theta_{i}\right )$. This candidate is then accepted or rejected with probability $\alpha \left (\theta^\prime \mid \theta_i \right )$  given by

\begin{equation}
\alpha \left ( \theta^\prime \mid \theta_i \right ) = \min \left ( 1, \frac{p \left ( \theta^\prime \mid \mathcal{D} \right ) Q \left ( \theta_{i} \mid \theta^\prime \right ) }{p \left ( \theta_i \mid \mathcal{D} \right ) Q \left (\theta^\prime \mid  \theta_{i} \right ) } \right ) = \min \left ( 1, \frac{p \left (\mathcal{D} \mid \theta^\prime \right )p \left ( \theta^\prime \right ) Q \left ( \theta_{i} \mid \theta^\prime \right ) }{p \left (\mathcal{D} \mid \theta_i \right )p \left ( \theta_i \right ) Q \left (\theta^\prime \mid  \theta_{i} \right ) } \right )
\end{equation}

\noindent This acceptance probability is independent of the normalization constant $p \left ( \mathcal{D} \right )$ and therefore computationally tractable. If the candidate is accepted $\theta_{i+1} = \theta^\prime$; otherwise, $\theta_{i+1} = \theta_i$. This algorithm iterates until a sufficient number of posterior samples have been generated to accurately represent the posterior.

A common method to judge whether sufficient posterior samples have been generated using the MCMC sampler is the notion of effective sample size (ESS). The ESS of a $N$ sample correlated population $\theta_{i=1...N}$ corresponds to the number of independent samples which would estimate a quantity of interest, $\hat{q} = E \left [ q \left (  \theta \right ) \right ]$, with the same variance as the correlated sample population. Therefore, the ESS of a quantify of interest can be computed as

 \begin{equation}
 N_{ESS} = \frac{Var \left [q \left (  \theta \right )\right  ]}{Var \left [ \hat{q} \right  ]}
 \end{equation}

 Based upon the Markov chain Central Limit Theorem, for correlated samples generated according to the stationary distribution of a Markov process, the ESS can be approximated as

  \begin{equation}
 N_{ESS} = \frac{N}{1 + 2 \sum_{i=1}^N \rho_i \left ( q\left (\theta_{1 \dots N} \right ) \right )}
 \end{equation}

\noindent where $\rho_i$ is the $i$th lag autocorrelation of the quantify of interest whose evolution is defined by the Markov chain. Therefore, when designing a Metropolis-Hastings proposal distribution $Q \left (\theta^\prime \mid \theta_{i}\right )$ we want minimize the correlation and thus maximize $\frac{N_{ESS} }{N}$ to attain the highest possible sampling efficiency. However, even very effective samplers often need tens of thousands of sequential model evaluations to generate enough effective samples, making this form of MCMC challenging for computationally expensive models.

\subsection{Sequential Monte Carlo samplers}
\label{sec:SMCTheory}
To overcome many of the challenges associated with a standard Metropolis-Hastings based MCMC method, parallel methods, like Sequential Monte Carlo (SMC), have been introduced to better leverage high performance computing resources. SMC methods for Bayesian inference transport a sample population, initially distributed so that it can approximate expectations with respect to the prior, to one which can approximate posterior expectations~\cite{Moral2006, ChingChen2007, Kantas2014, catanach2018bayesian}. Typically, this means a population of samples initially distributed according to the prior being transformed into a population of samples approximately distributed according to the posterior. For Sequential Tempered MCMC (ST-MCMC)~\cite{catanach2018bayesian}, we break down the inference problem into a series of annealing levels $i$ defined by an annealing factor $\beta_i \in \left [0, 1 \right ]$. Each level defines an intermediate distribution, $\pi_{\beta_i}\left (\theta \right )$, which we would like to use to generate samples and to compute expectations. These intermediate distributions take the form of

\begin{equation}
\pi_{\beta_i}\left (\theta \right ) = p \left (\theta \mid \mathcal{D}, \beta_i \right) = \frac{p \left (\mathcal{D} \mid \theta \right )^{\beta_i} p \left ( \theta \right )}{\int p \left (\mathcal{D} \mid \theta \right )^{\beta_i} p \left ( \theta \right ) d\theta}
\end{equation}

This annealing approach is common to many SMC methods used for Bayesian inference and can be thought of as gradually integrating the influence of the data into the solution. To simplify the problem of transporting samples from the prior, $\pi_0\left (\theta \right )$, to posterior, $\pi_1\left (\theta \right )$, we transport samples sequentially through each level in the sequence, i.e. $\pi_{\beta_i}\left (\theta \right )$ to $\pi_{\beta_{i+1}}\left (\theta \right )$ with $\beta_{i+1} > \beta_{i}$. Because we can control the size of the jump $\Delta \beta = \beta_{i+1}-\beta_i$, we can ensure that this change is not too drastic as to cause poor approximation of the true distribution i.e. too drastic a decrease in the ESS. Transporting samples is done in three steps:

\begin{enumerate}
	\item Re-weight the previous sample population, distributed according to $\pi_{\beta_i}\left (\theta \right )$, with unnormalized weights $w_i = p \left (\mathcal{D} \mid \theta_i \right )^{\Delta \beta}$ to reflect expectations with respect to the new distribution $\pi_{\beta_{i+1}}\left (\theta \right )$.
	\item Re-sample the population according to the weights so that the samples now reflect $\pi_{\beta_{i+1}}\left (\theta \right )$.
	\item Seed a Markov chain starting at each sample, and then use MCMC to explore $\pi_{\beta_{i+1}}\left (\theta \right )$.
\end{enumerate}

The MCMC step is essential to ensure the sample population does not degenerate since the re-weighting and re-sampling steps reduce the ESS of the population. MCMC increases the ESS because it decorrelates the seeds and explores the target distribution, causing samples to better reflect it. Typically, $\Delta \beta$ is chosen adaptively to not decrease the ESS too much during the update. This is achieved by finding a $\Delta \beta$ such that the coefficient of variation (COV) of the sample weights equals a target $\kappa$. The COV approximates the ESS by $N_{ESS} \approx \frac{N}{1 + \kappa^2}$. Therefore, we find a $\Delta \beta > 0$ that solves the equation

\begin{equation}
\kappa = \frac{\sqrt{\frac{1}{N} \sum_{i=1}^N \left ( w_i \left ( \Delta \beta \right ) - \hat{w}\left ( \Delta \beta \right ) \right)^2}}{\hat{w}\left ( \Delta \beta \right )}
\label{eq:betatune}
\end{equation}

\noindent Here, $w_i\left ( \Delta \beta \right ) = p \left (\mathcal{D} \mid \theta_i \right )^{\Delta \beta}$ and $\hat{w}\left ( \Delta \beta \right ) = \frac{1}{N} \sum_{i=1}^N w_i\left ( \Delta \beta \right )$. Typically, we choose $\kappa = 1$, which corresponds to a target ESS of $N/2$. With this method for finding $\Delta \beta$, we then sequentially move through all the adaptively tuned annealing levels until we reach the final posterior reflected by $\pi_1\left (\theta \right )$. For more details about this algorithm, see~\cite{catanach2018bayesian}.

\section{Multifidelity ST-MCMC}
\label{sec:multifi_stmcmc}
For expensive models, ST-MCMC and similar SMC based methods may still be computationally prohibitive. One approach to overcome this computational burden is to utilize a multifidelity model hierarchy that can speed up sampling. The key idea is that for early levels of the ST-MCMC algorithm a low fidelity but computationally cheap model may be sufficiently informative to guide the samples towards the ultimate posterior distribution. This is because at early levels, the annealing factor causes the contribution of the likelihood to be damped, so perturbations in the likelihood caused by the decrease in model fidelity are less important. Intuitively, a lower fidelity model may be useful when the bias it introduces in the likelihood function is less than the variance of the likelihood at the annealing level. We consider different strategies for rigorously defining this intuition in the rest of this section.

We extend ST-MCMC to multifidelity ST-MCMC by defining intermediate levels both in terms of their annealing factor $\beta$ and the choice of model fidelity $M$, where we consider a hierarchy of models $\mathcal{M} = \{ M_j : j = 1 \dots K\}$ with increasing fidelity and computational cost. This algorithm is described in Algorithm \ref{MFSTMCMC}. The key challenge is determining the best strategy for choosing $\beta_l$ and $m_l$ and Step 2, since the rest of the algorithm proceeds like standard ST-MCMC.

\begin{algorithm}
\SetKwInOut{Input}{Inputs}\SetKwInOut{Output}{Output}\SetKwInOut{Initialization}{Initialize}
\Input{ Prior distribution $p \left ( \theta \right ) $\\
Model fidelity hierarchy $\mathcal{M} = \{ M_j : j = 1 \dots K\}$\\
Likelihood function $p \left ( \mathcal{D} \mid \theta,  M_j \right )$ \\
 Number of samples $N$}
\Output{Posterior samples $\theta_{1 \dots N}$}
\BlankLine
\Begin{
\Initialization{Set level counter $l=0$ \\
 Set annealing factor $\beta_0 = 0$\\
 Set model level $m_0 = 1$ \\
 Define the first intermediate distribution $\pi_0 \left ( \theta \right ) = p \left ( \theta \right )$ \\
 Draw initial samples $\theta_{1 \dots N}^0 \sim \pi_0 \left ( \theta \right ) $ }
\While{$\pi_{\beta_l} \left ( \theta \mid M_{m_l} \right ) \neq p \left ( \theta \mid \mathcal{D}, M_K \right )$}{
1) Increment level counter $l = l + 1$\\
2) Choose the next $\beta_l$ and $m_l$ based on the previous level sample population $\theta_{1 \dots N}^{l-1}$\\
3) Define the next intermediate distribution $\pi_{\beta_l} \left ( \theta \mid M_{m_l} \right ) \propto p \left ( \mathcal{D} \mid \theta,  M_{m_l} \right )^{\beta_l} p \left ( \theta \right )$  \\
4) Compute the unnormalized importance weights for the population as $w_i = \frac{\pi_{\beta_l} \left ( \theta_i^{l-1} \mid M_{m_l} \right )}{\pi_{\beta_{l-1}} \left ( \theta_i^{l-1} \mid M_{m_{l-1}} \right )}$\\
5) Resample the population according to the normalized importance weights to initialize $\theta_{1 \dots N}^{l}$\\
6) Evolve the samples $\theta_{1 \dots N}^{l}$ using MCMC with stationary distribution $\pi_{\beta_l} \left ( \theta \mid M_{m_l} \right )$
}
\Return $\theta_{1 \dots N} = \theta_{1 \dots N}^{l}$
}
\caption{Multifidelity ST-MCMC \label{MFSTMCMC}}
\end{algorithm}

\subsection{Tempering and bridging using an Effective Sample Size criteria}

One approach to choosing the appropriate annealing factor and model fidelity is a combined likelihood tempering and model bridging scheme discussed in Latz et al.~\cite{Latz2018}. This scheme is based upon the ESS statistic discussed in Section \ref{sec:SMCTheory}. Within Multifidelity ST-MCMC, at every level $l$ of Algorithm \ref{MFSTMCMC}, we choose whether to temper by changing $\beta_l = \beta_{I-1}+\Delta \beta$ or to bridge by changing the model fidelity, $m_l = m_{l-1} + 1$. This choice is made by measuring the ESS of the sample population with respect to the next change in the model fidelity by computing the unnormalized weights as if we were to bridge:

\begin{equation}
w_i = \frac{\pi_{\beta_l-1} \left ( \theta_i^{l-1} \mid M_{m_{l-1}+1}\right )}{\pi_{\beta_{l-1}} \left ( \theta_i^{l-1} \mid M_{m_{l-1}} \right )} = \frac{p \left ( \mathcal{D} \mid \theta_i^{l-1},  M_{m_{l-1}+1} \right )^{\beta_{l-1}} }{p \left ( \mathcal{D} \mid \theta_i^{l-1},  M_{m_{l-1}} \right )^{\beta_{l-1}} }
\end{equation}

We can then compute the coefficient of variation of the weights, $w_i$, to determine if it exceeds a target $\kappa$. If it does, we choose to bridge to the next model fidelity because the sample population is beginning to degenerate, so it no longer has sufficient ESS to approximate the next level intermediate posterior. If the COV is less than $\kappa$, we choose to keep the current model but instead temper $\beta$. The next beta is chosen using the same strategy as before by solving Equation \ref{eq:betatune}.

\subsection{Information-theoretic criteria for model fidelity adaptation}
\label{sec:it_comp}
We introduce a new criteria for model fidelity selection based on information theory. This criteria is motivated by the fact that the ESS-based strategy described above only decides to change fidelity based upon the next model in the hierarchy. This means the sampler may continue to use a low fidelity model because it still meets the ESS criteria with respect to the next model, even when it is drifting away from the high fidelity posterior.  Instead, we introduce a method that utilizes a limited number of full fidelity model evaluations to help us better decide when to bridge model fidelity. Depending on the computational cost of the full fidelity simulations, the improved bridging strategy and the improved robustness of this method may outweigh the cost of these full fidelity solutions.

Within  multifidelity ST-MCMC, if the algorithm is at annealing level $l$ with annealing factor $\beta_l \in \left [0, 1\right ]$ and has been sampling the intermediate posterior defined by a model, $M_{m_l}$, in a model hierarchy $\mathcal{M} = \{ M_j : j = 1 \dots K\}$, we would like to know whether $M_{m_l}$ still provides meaningful information about the ultimate posterior once we move to level $l+1$ with annealing factor $\beta_{l+1}$. Here we assume the ultimate posterior is $p \left (\theta \mid \mathcal{D}, M_K \right )$, where $M_K$ is the highest fidelity model. Therefore, unlike the previous ESS-based method, we begin by proposing a tempering step under the assumption that the current model fidelity is valid. We find the proposed $\beta_{l+1}$ by solving Equation \ref{eq:betatune}.

If $M_{m_l}$ no longer provides meaningful information at the next level, we use the next highest fidelity model in the algorithm, $M_{m_l+1}$. This criteria can be formulated using a generalization of information theory~\cite{duersch2019generalizing}, where the information gained about the full posterior, $p \left (\theta \mid \mathcal{D}, M_K \right )$, by moving from level $l$ to $l+1$ with model $M_{m_l}$ is:

\begin{equation}
\begin{split}
&\mathcal{I}_{p \left (\theta \mid \mathcal{D}, M_K \right )} \left [ p \left (\theta \mid \mathcal{D}, M_{m_l}, \beta_{l+1} \right ) || p \left (\theta \mid \mathcal{D}, M_{m_l}, \beta_l \right )\right ]\\
&=D_{\text{KL}} \left [p \left (\theta \mid \mathcal{D}, M_K \right ) || p \left (\theta \mid \mathcal{D}, M_{m_l}, \beta_l \right )\right ] -D_{\text{KL}} \left [p \left (\theta \mid \mathcal{D}, M_K \right ) || p \left (\theta \mid \mathcal{D}, M_{m_l}, \beta_{l+1} \right )\right ]\\
&=\int p \left (\theta \mid \mathcal{D}, M_K \right ) \log \frac{p \left (\theta \mid \mathcal{D}, M_{m_l}, \beta_{l+1} \right )}{p \left (\theta \mid \mathcal{D}, M_{m_l}, \beta_l \right )} d\theta
\end{split}
\label{eq:info}
\end{equation}

\noindent If this quantity is positive, then the intermediate posterior defined by $\beta_{l+1}$ and $m_l$ is closer to the ultimate posterior than the previous level, so we choose $m_{l+1} = m_{l}$. However if this is negative, this update is driving the distribution away from the ultimate posterior, so we should use a higher fidelity model for the next update, thus $m_{l+1} = m_{l}+1$.

If we choose to update the model fidelity, we consider two strategies for choosing $\beta_{l+1}$ for the next level. In the first strategy, keeping with the ESS-based tempering and bridging framework from above, is to set $\beta_{l+1} = \beta_{l}$. The second strategy is to tune $\beta_{l+1}$ to try attain an ESS target. The first strategy is often more computationally efficient, but may not be as robust if changing model fidelity introduces significant variations. To tune $\beta_{l+1}$, we first define the importance weight for transitioning from a level defined by $\beta_i$ and $m_{l}$ to a level defined by $\beta_{l+1}$ and $m_{l+1} = m_{l}+1$ as

\begin{equation}
  \label{eq:beta_tuning}
w_i = \frac{\pi_{\beta_{l+1} \left ( \theta_i^{l} \mid M_{m_{l}+1}\right )}}{\pi_{\beta_{l}} \left ( \theta_i^{l} \mid M_{m_{l}} \right )} = \frac{p \left ( \mathcal{D} \mid \theta_i^{l},  M_{m_{l}+1} \right )^{\beta_{l+1}} }{p \left ( \mathcal{D} \mid \theta_i^{l},  M_{m_{l}} \right )^{\beta_{l}} }
\end{equation}

Using the same approach as before, we can then tune $\beta_{l+1}$ to meet some ESS target based upon the COV of the weights. However, unlike in previous problems, this might not be achievable.  If Equation \eqref{eq:betatune} has a solution, we chose the largest $\beta_{l+1}$ such that the COV target is met. If Equation \eqref{eq:betatune} does not have a solution, we find the $\beta_{l+1}$ that minimizes the COV and thus maximizes the ESS.

\subsection{Computing the information-theoretic criteria}

Since computing the information in Equation \ref{eq:info} requires marginalizing over the posterior, it can be challenging. However, this computation can be approximated using the samples from ST-MCMC. The first step is to recognize the connection between computing this criteria and estimating the model evidence:

\begin{equation}
\begin{split}
&\mathcal{I}_{p \left (\theta \mid \mathcal{D}, M_K \right )} \left [ p \left (\theta \mid \mathcal{D}, M_{m_l}, \beta_{l+1} \right ) || p \left (\theta \mid \mathcal{D}, M_{m_l}, \beta_l \right )\right ]\\
&=\int p \left (\theta \mid \mathcal{D}, M_K \right )\log \frac{p \left (\theta \mid \mathcal{D}, M_{m_l}, \beta_{l+1} \right )}{p \left (\theta \mid \mathcal{D}, M_{m_l}, \beta_l \right )} d\theta\\
&=\int p \left (\theta \mid \mathcal{D}, M_K \right )\log \frac{p \left (\mathcal{D} \mid \theta, M_{m_l} \right )^{\beta_{l+1}} p\left (\theta \right )}{p \left (\mathcal{D} \mid M_{m_l}, \beta_{l+1} \right )} \frac{p \left (\mathcal{D} \mid M_{m_l}, \beta_{l} \right )}{p \left (\mathcal{D} \mid \theta, M_{m_l} \right )^{\beta_l} p\left (\theta \right )} d\theta\\
&= \int p \left (\theta \mid \mathcal{D}, M_K \right )\log p \left (\mathcal{D} \mid \theta, M_{m_l} \right )^{\Delta \beta} \frac{p \left (\mathcal{D} \mid M_{m_l}, \beta_{l} \right )}{p \left (\mathcal{D} \mid M_{m_l}, \beta_{l+1} \right )} d\theta
\end{split}
\end{equation}

\noindent Here, $p \left (\mathcal{D} \mid M, \beta \right )$ is the model evidence, i.e. the normalization, for the likelihood defined by the model $M$ with an annealing factor $\beta$:

\begin{equation}
p \left (\mathcal{D} \mid M, \beta \right ) = \int p \left (\mathcal{D} \mid \theta, M \right )^{\beta} p \left ( \theta \right ) d\theta
\end{equation}

\noindent By noting the relationship to model evidence, the ratio of the evidences can be expressed as:

\begin{equation}
\begin{split}
\frac{p \left (\mathcal{D} \mid M_{m_l}, \beta_{l} \right )}{p \left (\mathcal{D} \mid M_{m_l}, \beta_{l+1} \right )} &= \frac{\int p \left (\mathcal{D} \mid \theta, M_{m_l} \right )^{\beta_l} p \left ( \theta \right ) d\theta}{\int p \left (\mathcal{D} \mid \theta, M_{m_l} \right )^{\beta_{l+1}} p \left ( \theta \right ) d\theta}\\
&= \frac{1}{\int p \left (\mathcal{D} \mid \theta, M_{m_l} \right )^{\Delta \beta} \frac{p \left (\mathcal{D} \mid \theta, M_{m_l} \right )^{\beta_l} p \left ( \theta \right )}{\int p \left (\mathcal{D} \mid \theta, M_{m_l} \right )^{\beta_l} p \left ( \theta \right ) d\theta}  d\theta}\\
&= \frac{1}{\int p \left (\mathcal{D} \mid \theta, M_{m_l} \right )^{\Delta \beta} p \left (\theta \mid \mathcal{D}, M_{m_l}, \beta_l \right )  d\theta}\\
&= \frac{1}{\text{E}_{\theta \sim p \left (\theta \mid \mathcal{D}, M_{m_l}, \beta_l \right )} \left [ p \left (\mathcal{D} \mid \theta, M_{m_l} \right )^{\Delta \beta} \right ]}
\end{split}
\end{equation}

\noindent Therefore,

\begin{equation}
\begin{split}
&\int p \left (\theta \mid \mathcal{D}, M_K \right )\log p \left (\mathcal{D} \mid \theta, M_{m_l} \right )^{\Delta \beta} \frac{p \left (\mathcal{D} \mid M_{m_l}, \beta_{l} \right )}{p \left (\mathcal{D} \mid M_{m_l}, \beta_{l+1} \right )} d\theta\\
&= \int p \left (\theta \mid \mathcal{D}, M_K \right )\log \frac{p \left (\mathcal{D} \mid \theta, M_{m_l} \right )^{\Delta \beta}}{\text{E}_{\theta \sim p \left (\theta \mid \mathcal{D}, M_{m_l}, \beta_l \right )} \left [ p \left (\mathcal{D} \mid \theta, M_{m_l} \right )^{\Delta \beta} \right ]} d\theta
\end{split}
\end{equation}

Since we cannot yet sample $p \left (\theta \mid \mathcal{D}, M_K \right )$ we use importance sampling to express this integral in terms of the level $l$ distribution, which we have samples for:

\begin{equation}
\begin{split}
&\int p \left (\theta \mid \mathcal{D}, M_K \right )\log \frac{p \left (\mathcal{D} \mid \theta, M_{m_l} \right )^{\Delta \beta}}{\text{E}_{\theta \sim p \left (\theta \mid \mathcal{D}, M_{m_l}, \beta_l \right )} \left [ p \left (\mathcal{D} \mid \theta, M_{m_l} \right )^{\Delta \beta} \right ]} d\theta\\
&= \int p \left (\theta \mid \mathcal{D}, M_{m_l}, \beta_l \right ) \frac{p \left (\theta \mid \mathcal{D}, M_K \right )}{p \left (\theta \mid \mathcal{D}, M_{m_l}, \beta_l \right )}\log \frac{p \left (\mathcal{D} \mid \theta, M_{m_l} \right )^{\Delta \beta}}{\text{E}_{\theta \sim p \left (\theta \mid \mathcal{D}, M_{m_l}, \beta_l \right )} \left [ p \left (\mathcal{D} \mid \theta, M_{m_l} \right )^{\Delta \beta} \right ]} d\theta\\
&\propto \int p \left (\theta \mid \mathcal{D}, M_{m_l}, \beta_l \right ) \frac{p \left (\mathcal{D} \mid \theta, M_k \right )}{p \left (\mathcal{D} \mid \theta, M_{m_l} \right )^{\beta_l}}\log \frac{p \left (\mathcal{D} \mid \theta, M_{m_l} \right )^{\Delta \beta}}{\text{E}_{\theta \sim p \left (\theta \mid \mathcal{D}, M_{m_l}, \beta_l \right )} \left [ p \left (\mathcal{D} \mid \theta, M_{m_l} \right )^{\Delta \beta} \right ]} d\theta\\
&=\mathcal{I}_{criteria}
\end{split}
\end{equation}

\noindent This integral only needs to be known up to a constant of proportionality since we only need to assess whether it is positive. We can then express it in terms of expectations as:

\begin{equation}
\begin{split}
\mathcal{I}_{criteria} = & \text{E}_{\theta \sim p \left (\theta \mid \mathcal{D}, M_{m_l}, \beta_l \right )} \left [ \frac{p \left (\mathcal{D} \mid \theta, M_K \right )}{p \left (\mathcal{D} \mid \theta, M_{m_l} \right )^{\beta_l}} \log p \left (\mathcal{D} \mid \theta, M_{m_l} \right )^{\Delta \beta} \right ] -\\
&\text{E}_{\theta \sim p \left (\theta \mid \mathcal{D}, M_{m_l}, \beta_l \right )} \left [ \frac{p \left (\mathcal{D} \mid \theta, M_K \right )}{p \left (\mathcal{D} \mid \theta, M_{m_l}\right )^{\beta_l}} \right ] \log \text{E}_{\theta \sim p \left (\theta \mid \mathcal{D}, M_{m_l}, \beta_l \right )} \left [ p \left (\mathcal{D} \mid \theta, M_{m_l} \right )^{\Delta \beta} \right ]
\end{split}
\end{equation}

\noindent We can now estimate whether $\mathcal{I}_{criteria}$ is positive or negative to determine if information is gained or lost by this next update. To approximate these expectations we use the $N$ ST-MCMC samples at level $l$, where ${\theta^l_{i}: i = 1 \dots N}$, which are approximately distributed according to $ p \left (\theta \mid \mathcal{D}, M_{m_l}, \beta_l \right )$. We also use the evaluation of the full fidelity model likelihood at these points:

\begin{equation}
\begin{split}
\text{E}_{\theta \sim p \left (\theta \mid \mathcal{D}, M_{m_l}, \beta_l \right )} &\left [ \frac{p \left (\mathcal{D} \mid \theta, M_K \right )}{p \left (\mathcal{D} \mid \theta, M_{m_l} \right )^{\beta_l}} \log p \left (\mathcal{D} \mid \theta, M_{m_l} \right )^{\Delta \beta} \right ]\\
& \approx \sum_{l=1}^{N} \frac{p \left (\mathcal{D} \mid \theta^l_i, M_K \right )}{p \left (\mathcal{D} \mid \theta^l_i, M_{m_l} \right )^{\beta_l}} \log p \left (\mathcal{D} \mid \theta^l_i, M_{m_l} \right )^{\Delta \beta}
\end{split}
\end{equation}

\begin{equation}
\text{E}_{\theta \sim p \left (\theta \mid \mathcal{D}, M_{m_l}, \beta_l \right )} \left [ \frac{p \left (\mathcal{D} \mid \theta, M_K \right )}{p \left (\mathcal{D} \mid \theta, M_{m_l} \right )^{\beta_l}} \right ] \approx \sum_{l=1}^{N} \frac{p \left (\mathcal{D} \mid \theta^l_i, M_K \right )}{p \left (\mathcal{D} \mid \theta^l_i, M_{m_l} \right )^{\beta_l}}
\end{equation}

\begin{equation}
\text{E}_{\theta \sim p \left (\theta \mid \mathcal{D}, M_{m_l}, \beta_l \right )} \left [ p \left (\mathcal{D} \mid \theta, M_{m_l} \right )^{\Delta \beta} \right ] \approx \sum_{l=1}^{N} p \left (\mathcal{D} \mid \theta^l_i, M_{m_l} \right )^{\Delta \beta}
\end{equation}

\subsection{Multifidelity ST-MCMC and Bayesian model selection}

SMC and ST-MCMC methods not only enable robust solutions of Bayesian inference problems for parameter calibration, but also enable Bayesian model selection by providing asymptotically unbiased estimates of the model evidence. Model evidence estimates are generally highly computationally expensive since they require estimating the normalization constant,

\begin{equation}
p \left ( \mathcal{D} \mid M \right ) = \int p \left ( \mathcal{D} \mid \theta, M \right ) p \left ( \theta \mid M\right ) d\theta
,
\end{equation}
which consists of marginalizing the likelihood over the prior distribution. If the high probability content of the prior differs significantly from the most likely parameters according to the likelihood, it is difficult to estimate this integral using Monte Carlo samples.  Instead, SMC type methods break down this estimate into a series of Monte Carlo approximations over the intermediate distribution levels previously discussed.  As such, a hierarchy of multifidelity models can also be used to accelerate this estimate within the Multifidelity ST-MCMC framework. Using the methods described in~\cite{ChingChen2007, CALDERHEAD20094028} a SMC based sampler, like Multifidelity ST-MCMC, can estimate the model evidence of the highest fidelity model, $M_K$, by estimating the product:

\begin{equation}
p \left ( \mathcal{D} \mid M_K \right ) = \prod_{l = 1}^L \frac{ p \left (\mathcal{D} \mid m_l, \beta_l \right ) }{ p \left (\mathcal{D} \mid m_{l-1}, \beta_{l-1} \right ) } =  \prod_{l = 1}^L c_l
\end{equation}

\noindent where $p \left (\mathcal{D} \mid m, \beta \right ) = \int p \left ( \mathcal{D} \mid \theta, M_{m} \right )^{\beta} p \left ( \theta \right ) d\theta$ and $L$ is the final level of ST-MCMC. The ratio $c_l$ can be written as

\begin{equation}
\begin{split}
c_l &= \frac{\int p \left ( \mathcal{D} \mid \theta, M_{m_l} \right )^{\beta_l} p \left ( \theta \right ) d\theta}{\int p \left ( \mathcal{D} \mid \theta, M_{m_{l-1}} \right )^{\beta_{l-1}} p \left ( \theta \right ) d\theta}\\
&= \int \frac{p \left ( \mathcal{D} \mid \theta, M_{m_l} \right )^{\beta_l}}{p \left ( \mathcal{D} \mid \theta, M_{m_{l-1}} \right )^{\beta_{l-1}}} \frac{p \left ( \mathcal{D} \mid \theta, M_{m_{l-1}} \right )^{\beta_{l-1}} p \left ( \theta \right )}{\int p \left ( \mathcal{D} \mid \theta, M_{m_{l-1}} \right )^{\beta_{l-1}} p \left ( \theta \right ) d\theta} d\theta\\
&=  \text{E}_{\theta \sim p \left (\theta \mid \mathcal{D}, M_{m_{l-1}}, \beta_{l-1} \right )} \left [ \frac{p \left ( \mathcal{D} \mid \theta, M_{m_l} \right )^{\beta_l}}{p \left ( \mathcal{D} \mid \theta, M_{m_{l-1}} \right )^{\beta_{l-1}}} \right ]\\
&\approx \frac{1}{N} \sum_{i=1}^N w_i^l
\end{split}
\end{equation}

\noindent where $w_i^l$ are the unnormalized resampling weights at level $l$ for the sample population $\theta_{i=1 \dots N}^{l-1}$. Therefore, using the weights we already computed as part of Multifidelity ST-MCMC, we are able to compute an estimate of the model evidence.

\section{Multifidelity reduced models of the chemical master equation}
\label{sec:cme_surrogate}
The FSP algorithm introduced in section~\ref{sec:background} is commonly used to compute the likelihood of observed data measurements.
When used within MCMC sampling, the FSP is usually implemented in one of the following two ways.

The first is to fix a single, large, subset of states for all parameter samples~\cite{Gomez-Schiavon2017}.
Since the probability distribution of the CME changes significantly as the MCMC explores the parameter space, it is very difficult to specify a finite state set that accurately captures the significant portion of the probability mass for all times and all parameters. One can end up choosing a static FSP that is either inaccurate or inefficient.
This scenario is similar to when a static discretization scheme (e.g. finite element) is employed in the simulation of parametric partial differential equation models, in which the manually chosen grid size may turn out to be too coarse for some parameter regimes and excessive for others.

This drawback motivates the second approach that instead uses adaptive CME solvers. There have been many adaptive formulations of the FSP~\cite{Munsky2007, Sidje2006, Wolf2010, Sidje2015, Cao2016} in which the state set is iteratively expanded until the approximate solution satisfies a user-specified error tolerance.
However, there could be regions in the parameter space where the adaptive state set has to be expanded to an enormous size to accurately approximate the CME solution.
These `non-physical' parameter combinations usually fit poorly to the data, and the large computational effort for their forward solutions does not provide useful information about the posterior.
On the other hand, since reaction networks usually comprise of nonlinear and unpredictable interactions, it is difficult to know a priori which parameter values would give rise to such difficult (but meaningless) forward solutions of the full-fidelity CME.
Simple techniques to regularize the cost of the forward solutions by restricting either the computational time or the number of time steps may run the risk of mistakenly ignoring genuinely informative parameter candidates whose evaluation just happens to require high computational cost.

The framework of the Multifidelity ST-MCMC sampler allows us to conceive of a compromise. In particular, we recast the static FSP into a surrogate CME whose solution complexity is uniformly bounded across all parameters.
An adaptive FSP method with strict error tolerance is applied only to the surrogate master equations.
This allows us to avoid over committing to parameters that have low posterior probabilities during the early annealing levels, yet still guarantee accurate computation of the likelihood at later sampling stages.

\subsection{Implicitly defined finite state projection for constructing surrogate CME models}
\label{sec:surrogate_cme}
Surrogate models can be derived by adding restrictive assumptions to the physics of the original model.
In particular, consider a hypothetical physical biological surrogate of the original cells modeled by the full CME in which all cellular processes `freeze' when molecular copy numbers reach a certain set of thresholds.
As we increase these thresholds, the surrogate cells behave more freely and closer to the original cells and the master equation describing their behavior becomes closer to the original CME, illustrated in Fig.~\ref{fig:surrogate_cme_ssas}  .

Let $b_1, \ldots, b_N$ be bounds on the copy number of species $1$ through $N$. We define an approximate SRN whose propensities are surrogates of the original SRN propensities and are given by
\begin{equation}
  \label{eq:surrogate_propensity}
  \hat{\alpha}_j(\bm{x})
  =
  \alpha_j(\bm{x})\prod_{i}[x_i \leq b_i],
\end{equation}
where $[E]$ takes value $1$ if expression $E$ is true and zero otherwise.
Since there are no further transitions once the process enters a state that exceeds the bounds, the state space of the surrogate chemical master equation  is effectively reduced to the hyper-rectangle $H(b) = \times_{i=1}^{N}\{0,\ldots, b_i\}$.
Thus, the infinite-dimensional system of differential equations~\eqref{eq:cme_ode_form} is replaced by the finite-dimensional surrogate dynamical system
\begin{equation}
  \label{eq:surrogate_cme}
  M(b):\;
  \frac{d}{dt}{\widehat{\bm{p}}_{H}(t)}
  =
  \widehat{\bm{A}}_{H}
  \widehat{\bm{p}}_{H}(t),
  \;
  \widehat{\bm{p}}_{H}(0) = \widehat{\bm{p}}_0 \vert_{H},
\end{equation}
where the truncated infinitesimal generator $\widehat{A}_{H}$ is defined similar to eq.~\eqref{eq:cme_matrix} but with the exact propensities replaced by the surrogate propensities given in eq.~\eqref{eq:surrogate_propensity}.

We note that eq.~\eqref{eq:surrogate_cme} is equivalent to eq.~\eqref{eq:fsp_system} with $\Omega = H(b)$.
Thus, our surrogate propensities implicitly define a finite state projection of the original CME.
We also note that the surrogate CMEs need not be constrained within a hyper-rectangle as considered here. It may be beneficial to derive a sequence of transformations to the original propensities, with the approximations chosen in such a way that alleviate the computational burden of solving the original model by, e.g., making the lower-fidelity dynamical system less stiff than the high-fidelity one. We leave this more general strategy to future work.

For the present choice hyper-rectangular state spaces, we recall the important result mentioned in section~\ref{sec:fsp_background} that, as the entries of $\bm{b}$ increase monotonically, the state space $H(\bm{b})$ includes more states and the truncation error, measured as the $\ell_1$-distance between $\widehat{\bm{p}}_{H}$ and the true CME solution $\bm{p}(t)$ decreases \emph{monotonically}. This provides us with a straightforward and natural way to form a hierarchy of surrogate models within the Multifidelity ST-MCMC framework.

\begin{figure}
  \centering
  \includegraphics[scale=0.6]{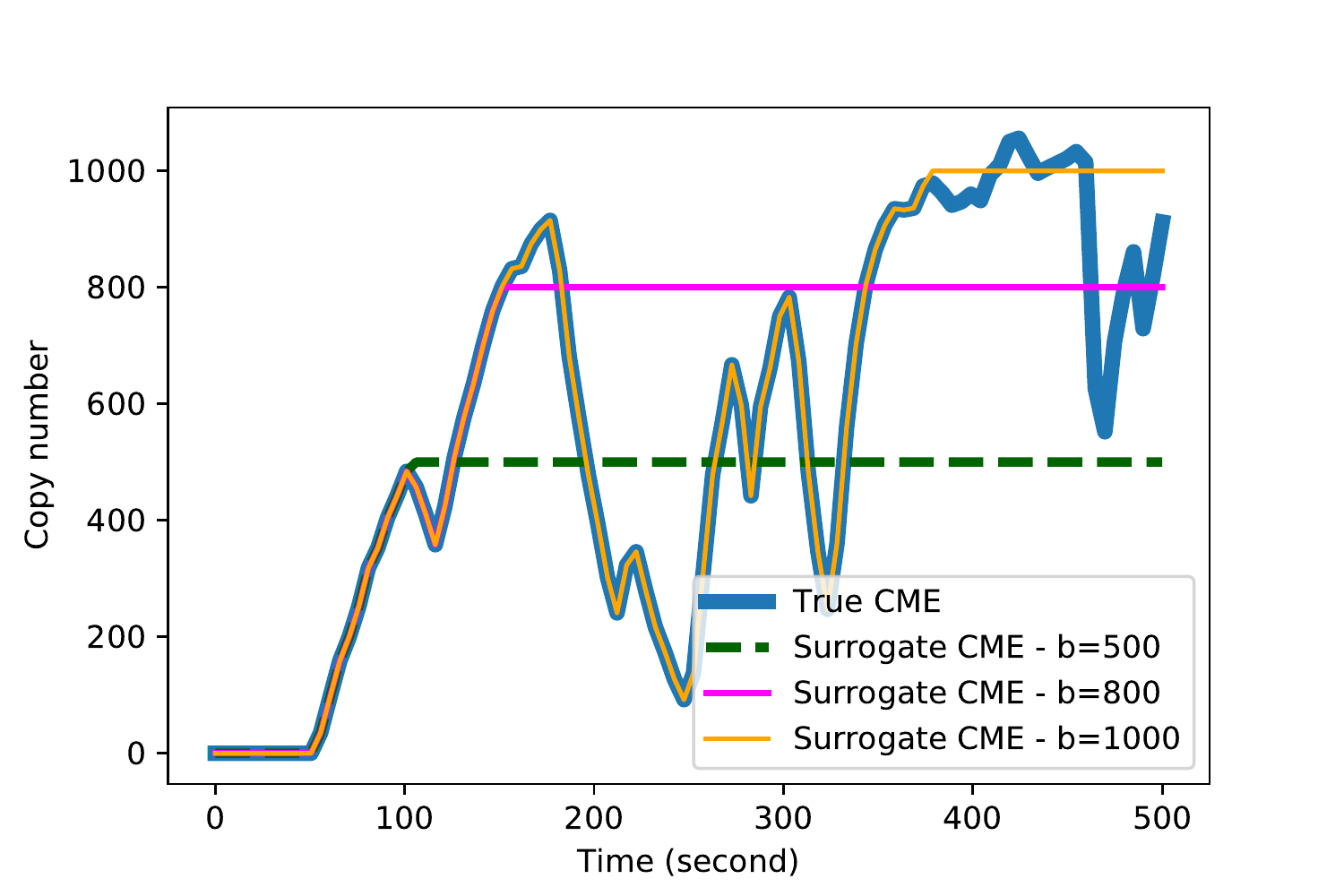}
  \caption{Illustrative realizations of the full and surrogate CMEs for a simple system with mass-action propensity. Given the same random seed, the simulated trajectory of the surrogate CME will be identical to that of the true CME until the state reaches a threshold, $b$, where the surrogate trajectory freezes.
  Increasing the threshold reduces the chance that the surrogate trajectories hit the bounds and consequently more realizations of the true CME are captured by the surrogate model.
  }
  \label{fig:surrogate_cme_ssas}
\end{figure}
\subsection{Using the hierarchy of surrogate CMEs within the ST-MCMC framework}
With the surrogate CME models formulated, a strict hierarchy of surrogate models can be defined by a sequence of bounding vectors $\bm{b}^{(1)} \leq \bm{b}^{(2)} \leq \ldots \leq \bm{b}^{(K)}$ where the ``$\leq$" sign applies element-wise. The corresponding surrogate models $M_{l} := M(\bm{b}^{(l)})$ are then defined as in eq.~\eqref{eq:surrogate_cme}.
As mentioned earlier, the error in the surrogate CMEs decrease monotonically as we increase the bounds. Therefore, $\{M_l\}$ forms a hierarchy in which each level attains more fidelity than its predecessor.

At the $l$-th level, the log-likelihood function in eq.~\eqref{eq:full_logl} is approximated by
\begin{equation}
  \label{eq:surrogate_logl}
  L(\mathcal{D} \vert \theta)
  \approx
  L_{M_l}(\theta)
  =
  \sum_{i=1}^{T}
  {
    \sum_{j=1}^{n_i}
    {
      \log
      p(
        t_i, \min(\bm{c}_{j,i}, \bm{b}^{(l)})
        \vert
        M_{l}(\theta)
      )
    }
  }
  .
\end{equation}
In the surrogate log-likelihood, the data is projected onto the finite state space $H(\bm{b}_{l})$, and the probabilities of the data at different time points  are computed from the surrogate Markov model $M_{l}$. Clearly, as $l$ increases, the surrogate function $L_{M_{l}}(\theta)$ becomes a more accurate approximation to the true log-likelihood $L(\mathcal{D}\vert \theta)$.
In the ideal situation where the hierarchy is allowed to have infinite depth, these surrogates are guaranteed to form a sequence that converges pointwise to the true log-likelihood from below.
This is shown formally in the following proposition.

\begin{proposition}
Let the sequence of bounds $\{\bm{b}^{(l)}\}_{l=1}^{\infty} \subset \mathbb{N}^{N}$, where $\bm{b}^{(l)}:=(b_1^{(l)}, \ldots, b_{N}^{(l)})$ be chosen such that
$\bm{b}^{(l)} \leq \bm{b}^{(l+1)}$ elementwise (i.e., $b_j^{(l)} \leq b_j^{(l+1)})$.
Assume that the continuous-time Markov chain underlying the SRN is non-explosive for all parameters and that the initial distribution of the CME~\eqref{eq:cme_ode_form} has finite support. For each fixed value of the parameter $\theta$, we have the following:
\begin{enumerate}
  \item $L_{M_{l}}(\theta) \rightarrow L(\mathcal{D} \vert \theta)$ as $l \rightarrow \infty$.
  \item There exists a subsequence $l_i$ such that $L_{M_{l_i}} \uparrow L(\mathcal{D} \vert \theta)$.
\end{enumerate}
Here, the log-likelihood function $L(\mathcal{D} \vert \theta)$ is defined as in~\eqref{eq:full_logl} and the surrogate $L_{M_{l_i}}(\theta)$ is defined as in~\eqref{eq:surrogate_logl}.
\end{proposition}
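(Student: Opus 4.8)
The plan is to reduce the statement to a single observation and then to show that each truncated state probability converges to the exact one. Since $L_{M_l}(\theta)$ and $L(\mathcal{D}\vert\theta)$ are finite sums over the $\sum_i n_i$ data points, it suffices to fix a time $t_i$ and an observed state $\bm{c} = \bm{c}_{j,i}$ and prove that $p(t_i,\min(\bm{c},\bm{b}^{(l)})\vert M_l(\theta)) \to p(t_i,\bm{c}\vert M(\theta))$. Because there are finitely many data points and the bounds increase without bound in every coordinate (as is required for the hierarchy to be exhaustive), there is an index $L_0$ so that $\bm{b}^{(l)} \ge \bm{c}_{j,i}$ for all data points whenever $l \ge L_0$; for such $l$ the clipping is inactive, $\min(\bm{c},\bm{b}^{(l)}) = \bm{c}$, and we only need to study the tail $l \ge L_0$. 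Finally, since the states outside $H(\bm{b}^{(l)})$ are absorbing for the surrogate generator $\widehat{\bm{A}}_{H}$ and never return mass to $H(\bm{b}^{(l)})$, the surrogate probabilities restricted to $H(\bm{b}^{(l)})$ coincide with the FSP solution $\widetilde{p}_{\Omega}$ of \eqref{eq:fsp_system} with $\Omega = H(\bm{b}^{(l)})$, so it is enough to work with the latter.

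The core of the argument is the pointwise convergence of the FSP solution, which rests on non-explosivity. I would use the probabilistic reading of \eqref{eq:fsp_system}: writing $\tau_l$ for the first exit time of the SRN process from $H(\bm{b}^{(l)})$, the killed (sub-stochastic) dynamics give $\widetilde{p}_{\Omega}(t,\bm{x}) = P(\bm{x}(t)=\bm{x},\ \tau_l > t)$ for $\bm{x}\in H(\bm{b}^{(l)})$, where the hypothesis that $\support{\bm{p}_0}$ is finite guarantees $\support{\bm{p}_0}\subseteq H(\bm{b}^{(l)})$ for large $l$, so the surrogate starts from the full unit mass. Summing over $\bm{x}$ gives $\sum_{\bm{x}\in H(\bm{b}^{(l)})}\widetilde{p}_{\Omega}(t,\bm{x}) = P(\tau_l > t)$, and the error bound \eqref{eq:fsp_error_bound} together with $0 \le \widetilde{p}_{\Omega}(t,\bm{x}) \le p(t,\bm{x})$ yields the per-state sandwich $0 \le p(t,\bm{x}) - \widetilde{p}_{\Omega}(t,\bm{x}) \le P(\tau_l \le t)$. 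Since $H(\bm{b}^{(l)})$ increases to the whole state space, the exit times $\tau_l$ increase to the explosion time, which is almost surely infinite by non-explosivity; hence $P(\tau_l \le t_i) \to 0$ and $\widetilde{p}_{\Omega}(t_i,\bm{c}) \to p(t_i,\bm{c})$. Applying continuity of $\log$ at the positive value $p(t_i,\bm{c})$ and summing the finitely many terms proves part (1). One may instead cite the known convergence of the FSP under non-explosivity in place of this self-contained computation.

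For part (2) I would exploit that the surrogate systematically underestimates: from $\widetilde{p}_{\Omega}(t,\bm{x}) \le p(t,\bm{x})$ we obtain $L_{M_l}(\theta) \le L(\mathcal{D}\vert\theta)$ for every $l \ge L_0$. Combined with the convergence from part (1) and the elementary fact that every real sequence admits a monotone subsequence, the bounded-above convergent sequence $\{L_{M_l}(\theta)\}_{l\ge L_0}$ has a nondecreasing subsequence, and since its limit $L(\mathcal{D}\vert\theta)$ is an upper bound, that subsequence must increase to $L(\mathcal{D}\vert\theta)$, giving $L_{M_{l_i}}\uparrow L(\mathcal{D}\vert\theta)$. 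Alternatively, the monotonicity can be made transparent at the level of individual states: since $H(\bm{b}^{(l)})\subseteq H(\bm{b}^{(l+1)})$ forces $\{\tau_l > t\}\subseteq\{\tau_{l+1} > t\}$ pathwise, each $\widetilde{p}_{\Omega}(t,\bm{x}) = P(\bm{x}(t)=\bm{x},\ \tau_l>t)$ is nondecreasing in $l$, so for $l \ge L_0$ the whole sequence $L_{M_l}(\theta)$ is already nondecreasing and its tail serves directly as the required subsequence.

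The step I expect to be the main obstacle is the pointwise FSP convergence, i.e. showing that the truncation leakage $P(\tau_l \le t_i)$ vanishes; this is precisely where non-explosivity is indispensable, since it is what prevents probability mass from escaping to infinity in finite time, and it is the only place the hypothesis truly enters. The remaining subtleties are bookkeeping: verifying that clipping the data is harmless on the tail, that finite initial support restores unit mass in the surrogate, and that the $\uparrow$ in part (2) requires either the monotone-subsequence lemma or the domain monotonicity of the killed semigroup, rather than merely the monotone decrease of the total FSP error from \cite[Theorem 1]{Munsky2008}.
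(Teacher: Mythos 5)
Your proof is correct and takes essentially the same approach as the paper's: both arguments rest on the probabilistic reading of the FSP truncation error, using non-explosivity to force the probability of escaping $H(\bm{b}^{(l)})$ during $[0,t_T]$ to vanish, and both handle part (2) by noting the clipping becomes inactive for large $l$ and invoking entry-wise monotonicity of the truncated solutions. The only cosmetic differences are that the paper quantifies the leakage via the number of reactions $R_\varepsilon$ (after a WLOG reduction to a point-mass initial condition) rather than via exit times $\tau_l$ increasing to the explosion time, and it cites the entry-wise monotonicity result of Munsky and Khammash where you derive it from $\{\tau_l > t\} \subseteq \{\tau_{l+1} > t\}$.
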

\begin{proof}
  Without loss of generality, we assume that the initial distribution is concentrated at a single state $\bm{x}_0$. Let $R$ be the number of reactions that occur during the finite time interval $[0, t_{T}]$ given the starting state $\bm{x}_0$.
  If there exists $\varepsilon > 0$ for which $\mathbb{P}(R > R_{\varepsilon}) \geq \varepsilon$ for every choice of $R_{\varepsilon}$, then we have $\mathbb{P}(R = \infty) \geq \varepsilon$, violating the assumption of non-explosion.
  Thus, for every $\varepsilon >0$, there exists $R_{\varepsilon}$ such that $\mathbb{P}(R > R_{\varepsilon}) < \varepsilon$. Furthermore, we can find $l_{\varepsilon}$ such that $H(\bm{b}^{(l_{\varepsilon})})$ contains all states that are reachable from $\bm{x}_0$ via $R_{\varepsilon}$ reactions or fewer. Thus, the probability for a sample path within $[0, t_{T}]$ of the surrogate CME to ever exceed $H(b^{(l)})$ is less than $\varepsilon$, and the corresponding solution of the surrogate CME is guaranteed to be less than $\varepsilon$ away (in one-norm) from the true CME solution. This proves (i).

  We have $\min(\bm{c}_{j,i}, \bm{b}^{(l)}) = \bm{c}_{j,i}$ for sufficiently large $l$. Entry-wise, the FSP approximations increase monotonically~\cite[Theorem 2.2]{Munsky2006}, so $p(t_i, \bm{c}_{j,i} | M_{l}(\theta))$ increases monotonically. We can then choose the subsequence $\{l_i\}$ from $\{l\}$ by simply truncating the leading elements until $\bm{b}^{(l)}$ disappears from the $\min(,)$ function in eq.~\eqref{eq:surrogate_logl}. This proves (ii).
\end{proof}

In summary, the FSP scheme allows us to define a hierarchy of surrogate master equations that approach the true CME as the surrogate state space enlarges. From this, we can define a sequence of surrogate log-likelihood functions that converge to the true log-likelihood from below.
These surrogates could be used within the Multifidelity ST-MCMC framework introduced in section~\ref{sec:multifi_stmcmc}.
Before we do so, however, we must first ensure that an accurate solution to the system~\eqref{eq:surrogate_cme} could be computed efficiently.
\subsection{Fast and accurate solution of the surrogate master equation}
\label{sec:adaptive_fsp}
Although the surrogate master equation~\eqref{eq:surrogate_cme} is a significant reduction from the infinite-dimensional CME, the number of states included in the truncated state space $H(b)$ still grows as $O(b_1\cdot \ldots \cdot b_{N})$ and the surrogate CME can quickly become expensive as we increase the entries of $b$. However, in practice, the probability mass of the solution vector $\widetilde{\bm{p}}_{H}(t)$ tends to concentrate at a much smaller subset of states. It is therefore advantageous to approximate $\widehat{\bm{p}}_{H}(t)$ with a more compactly supported distribution.
More precisely, if we let $\varepsilon > 0$ be an error tolerance, we can use a distribution $\widetilde{\bm{p}}_{\Omega}$ supported on $\Omega \subset H$ such that $\|\widehat{\bm{p}}_{H} - \widetilde{\bm{p}}_{\Omega}\| \leq \varepsilon$. Here, the FSP error bound~\eqref{eq:full_logl} plays a critical role in choosing the appropriate support set $\Omega$.
We note that this error bound was recently utilized by Fox et al.~\cite{Fox2016} to compute rigorous lower and upper bounds for the true log-likelihood function~\eqref{eq:full_logl}, from which comparison between certain models could be done even at a low-fidelity FSP solution. We do not pursue this direction in the present work.

To efficiently compute the solution of the surrogate CMEs using the principles just mentioned, we employ a new FSP implementation recently developed by Vo and Munsky~\cite{Vo2019parallel}. This solver
divides the time interval of interest $[0, \tf]$ into subintervals $I_j := [t_j, t_{j+1}),\; j=0,\ldots,n_{step}-1$
with $0:=t_0 < t_1 < \ldots < t_{n_{step}}:= \tf$.
On each time subinterval $I_j$, the dense tensor $\widehat{\bm{p}}_{H}(t)$ that is the solution of the surrogate CME~\eqref{eq:surrogate_cme} is approximated by a sparse tensor $\widetilde{\bm{p}}_{\Omega_j}(t)$ supported on $\Omega_j \subset H$, obtained from solving
\begin{equation}
  \label{eq:reduced_surrogate_cme}
  \frac{d}{dt}{\widetilde{\bm{p}}_{\Omega_j}}(t)
  =
  \widetilde{\bm{A}}_{\Omega_j}\widetilde{\bm{p}}_{\Omega_j}(t),
  \;
  t \in [t_j, t_{j+1})
\end{equation}
where
$$
\widetilde{\bm{A}}_{\Omega_j}(\bm{y}, \bm{x}) =
\begin{cases}
  \widehat{\bm{A}}_{H}(\bm{y}, \bm{x}) \; \text{if } \bm{x}, \bm{y} \in \Omega_j \\
  0 \; \text{otherwise}
\end{cases}
.
$$
Clearly, in solving~\eqref{eq:reduced_surrogate_cme}, we only need to keep track of the equations corresponding to states in $\Omega_j$ and that reduces the computational cost significantly.

From the FSP error bound~\eqref{eq:fsp_error_bound}, we derive an error-control criteria of the form
\begin{equation}
  g_j(t) = 1 - \mathds{1}^T\widetilde{\bm{p}}_{\Omega_j}(t)
  \leq
  \frac{t}{\tf}\varepsilon.
\end{equation}
If at some $t \in [t_j, t_{j+1})$ we find that the inequality is not satisfied, more states are added to $\Omega_j$ and the integration starts again from $t_j$ until the criteria is satisfied over the whole interval. The determination of the time steps $t_{j}$ is left to the ODE integrator employed for solving eq.~\eqref{eq:reduced_surrogate_cme}.

The state sets $\Omega_j$ are chosen as integral solutions of a set of inequality constraints. In particular, they have the form
\begin{equation}
  \label{eq:constrained_stateset}
  \Omega_j
  =
  \left\{
  \bm{x} \in H(b)
  \vert
  f_i(\bm{x}) \leq c_i^{(j)}
  \right\},
\end{equation}
where $f_i$ are functions that are chosen a priori, and $c_i>0$ are  positive scalars. To expand $\Omega_j$, we simply increase $c_i^{(j)}$ and run a breadth-first-search routine to explore all reachable states that satisfy the relaxed inequality constraints.

Implementation-wise, the approximate solution $\tilde{\bm{p}}_{\Omega_j}$ is stored in the coordinate format similar to that used for sparse tensors~\cite{Bader2008}.
The list of tensor indices is managed with the Distribute Dictionary data structure in the software package Zoltan~\cite{ZoltanOverviewArticle2002, ZoltanIsorropiaOverview2012}.
We also make use of parallel objects from the PETSc library~\cite{petsc-efficient, petsc-user-ref, petsc-web-page}.
Other implementation details will be communicated elsewhere~\cite{Vo2019parallel}.
It is worth pointing out that these MPI-based libraries allow our implementation to scale into multiple computing nodes, but the scaling efficiency will be limited due to the communication cost inherent in numerical operations such as matrix-vector multiplications. Therefore, simply plugging a parallel forward solution code on an increasing number of nodes into a serial MCMC sampler such as Metropolis-Hastings will have diminishing benefits.
The ST-MCMC, in contrast, allows us to achieve better utilization of the computing resources, since it is embarrassingly parallel. so doubling the number of processors simply enables us to simultaneously sample twice as many parameter samples in about the same computational time.

We also note that using an adaptive solver such as one we present here incurs some numerical error in the surrogate likelihood function. However, we expect this error to be negligible with a conservative choice of error tolerance. In particular, the error threshold $\varepsilon$ is always fixed at $10^{-8}$ in our numerical tests.
In the next section, we will confirm the accuracy and efficiency of our combined multifidelity sampler and adaptive model reduction scheme when applied to two biologically inspired problems and one on a real experimental dataset.
\section{Numerical examples}
\label{sec:numerical}
In the following tests we compare the four variants of the ST-MCMC described above: the Full-fidelity, ESS-Bridge, IT-Bridge, and Tuned IT-Bridge schemes. The full-fidelity scheme is the classic ST-MCMC with every likelihood evaluation using the highest model fidelity. The remaining three schemes are Multifidelity ST-MCMC variants in which the bridging between fidelity levels are determined based on the ESS, the new information theoretic criteria with or without $\beta$-tuning (see eq.~\eqref{eq:beta_tuning} and the preceding discussion in section~\ref{sec:it_comp}). In each of these Multifidelity schemes, the surrogate likelihood is formulated as described in section~\ref{sec:surrogate_cme}.
When all propensities are time-invariant as in the first two examples, the reduced system of ODEs in eq.~\eqref{eq:reduced_surrogate_cme} is solved by computing the action of the matrix exponential operator using the Krylov approximation with Incomplete Orthogonalization Procedure~\cite{Koskela, Vo2017d, GAUDREAULT2018236}, with the Krylov basis size fixed at $30$.
In the case of time-varying propensities in the third example, we use the Four stage third order L-stable Rosenbrock-W scheme~\cite{Rang2005} implemented in the TS module of PETSc~\cite{Abhyankar2018}.
All these ODEs solvers are set with conservative absolute tolerance of $10^{-14}$, and relative tolerance of $10^{-4}$.
\subsection{Parameter inference for repressilator gene circuit}
\def\TetR{\mathrm{TetR}}
\def\lambdacI{\mathrm{\lambda{cI}}}
\def\LacI{\mathrm{LacI}}

\begin{table}
  \centering
  \begin{tabular}{r l l l}
    \toprule
    & reaction & propensity\\
    \midrule
    1. & $\emptyset \rightarrow \TetR$ & $k_0/(1 + a_0[\LacI]^{b_0})$  \\
    2. & $\TetR \rightarrow \emptyset$ & $\gamma_0 [\TetR]$ \\
    3. & $\emptyset \rightarrow \lambdacI$ & $k_1/(1 + a_1[\TetR]^{b_1})$  \\
    4. & $\lambdacI \rightarrow \emptyset$ & $\gamma_1 [\lambdacI]$  \\
    5. & $\emptyset \rightarrow \LacI$ & $k_2/(1 + a_2[\lambdacI]^{b_2})$  \\
    6. & $\LacI \rightarrow \emptyset$ & $\gamma_2 [\LacI]$  \\
    \bottomrule
  \end{tabular}
  \caption{Reactions and propensities in the repressilator model. ([X] is the number of copies of the species X.)}
  \label{table:repressilator_reactions}
\end{table}

We first consider a three-species model inspired by the well-known repressilator gene circuit~\cite{Elowitz2000}. This model consists of three species, $\TetR$, $\lambdacI$ and $\LacI$, which constitute a negative feedback network (Table~\ref{table:repressilator_reactions}).
We simulate a dataset that consists of five measurement times $2, 4, 6, 8$, and $10$ minutes, with $1000$ cells measured at each time point. These numbers of single-cell measurements are typical of smFISH experiments~\cite{Raj2008, Kalb2019}.
We assume that all cells start at the state $\bm{x}_0 = (\TetR, \lambdacI, \LacI) = \left(0, 0, 0\right)$, so that at the initial time where there are no gene products.

The hierarchy of surrogate CMEs (cf.~\eqref{eq:surrogate_cme}) is defined by the bounds
$$
b^{(l)}
=
\begin{bmatrix}
  b_{\TetR}^{(l)} \\
  \\
  b_{\lambdacI}^{(l)} \\
  \\
  b_{\LacI}^{(l)}
\end{bmatrix}
=
\begin{bmatrix}
    \floor*{c_1 + (l-1)\frac{d_1 - c_1}{L_{\max} + 1}} \\
    \\
    \floor*{c_2 + (l-1)\frac{d_2 - c_2}{L_{\max} + 1}} \\
    \\
    \floor*{c_3 + (l-1)\frac{d_3 - c_3}{L_{\max} + 1}}
\end{bmatrix}
$$
where $(c_1, c_2, c_3) = (20, 40, 40)$ and $(d_1, d_2, d_3) = (50, 100, 100)$, with $L_{\max} = 10$. Therefore, the multifidelity ST-MCMC will transit through ten levels, with the highest-fidelity model having a state space of size $51\times 101 \times 101$.

We conduct parameter inference in $\log_{10}$-transformed space. The prior for the parameters are chosen to be a multivariate normal distribution (in $\log_{10}$ space) with a diagonal covariance matrix (see table~\ref{table:repressilator_posterior}). We ran both the ST-MCMC with the highest-fidelity surrogate CME and the multifidelity ST-MCMC on $29$ nodes, with $36$ cores per node with $1044$ parallel chains. For each level, samples were evolved using Metropolis-Hastings MCMC until a correlation target of $0.6$ was reached. The proposal distribution was adaptively tuned as part of the algorithm. Details on the ST-MCMC sampler and its tuning can be found in~\cite{catanach2018bayesian}.
Fig.~\ref{fig:repressilator_performance} shows the time taken for each sampling scheme to reach a certain annealing level, with the multifidelity schemes with our proposed IT-based criteria outperforming the state-of-the-art fixed-fidelity ST-MCMC and ESS Bridging schemes.
Specifically, while the full-fidelity ST-MCMC took over $25$ hours to finish, the Multifidelity ST-MCMC with ESS, Information Theoretic, and Tuned Information Theoretic Bridging took respectively $7.2$, $4.1$ and $5.3$ hours, resulting in speedup factors of about $3.5$, $6.2$ and $4.8$. The novel Information Theoretic (IT) schemes are clearly faster than the ESS-based scheme in this example, with the untuned IT scheme almost twice as fast as the ESS-based scheme. We observe that for the early levels of the algorithms, when $\beta$ is small, the lowest fidelity model is sufficiently informative. Further, at these early levels the ESS-based scheme is slightly faster than the others since it does not require any full model evaluations. However, after $\beta$ gets larger, the IT-based methods start to outperform the ESS-based method since they use the full model evaluations to judge that they do not need to bridge to the higher fidelity models as quickly as the ESS-based scheme does.

Although the prior assigns a probability density of only about $8.766\times 10^{-20}$ to the true parameter vector, all samplers were able to bring the particles close to the true parameters (Fig.~\ref{fig:repressilator_sampling_history}).
There is no notable difference in the shapes of the posterior distributions constructed from the samples of these two schemes (Fig.\ref{fig:repressilator_posteriors} and table~\ref{table:repressilator_posterior}).

\begin{table}
  \centering
    \begin{tabular}{l|r|r|rrrr} 
\toprule\makecell[c]{Parameter}& True & \makecell[c]{Prior} & \multicolumn{4}{c}{Posterior} \\& & & Full-fidelity  & ESS-Bridge  & IT-Bridge  & Tuned IT-Bridge \\ 
\hline$\log_{10}(k_0)$ &  1.00 & $ 10.00 \pm 0.3 $ & $ \text{1.00} \pm 0.01$  & $ \text{1.00} \pm 0.02$  & $ \text{0.99} \pm 0.01$  & $ \text{1.00} \pm 0.01$ \\ 
$\log_{10}(\gamma_0)$ & -2.00 & $ 0.10 \pm 0.3 $ & $ \text{-1.98} \pm 0.07$  & $ \text{-1.98} \pm 0.08$  & $ \text{-1.98} \pm 0.07$  & $ \text{-1.98} \pm 0.07$ \\ 
$\log_{10}(a_0)$ & -1.00 & $ 0.10 \pm 0.3 $ & $ \text{-1.05} \pm 0.06$  & $ \text{-1.05} \pm 0.07$  & $ \text{-1.07} \pm 0.06$  & $ \text{-1.06} \pm 0.06$ \\ 
$\log_{10}(b_0)$ &  0.30 & $ 0.10 \pm 0.3 $ & $ \text{0.31} \pm 0.01$  & $ \text{0.31} \pm 0.01$  & $ \text{0.31} \pm 0.01$  & $ \text{0.31} \pm 0.01$ \\ 
$\log_{10}(k_1)$ &  0.88 & $ 10.00 \pm 0.3 $ & $ \text{0.87} \pm 0.00$  & $ \text{0.87} \pm 0.01$  & $ \text{0.87} \pm 0.00$  & $ \text{0.87} \pm 0.00$ \\ 
$\log_{10}(\gamma_1)$ & -1.70 & $ 0.10 \pm 0.3 $ & $ \text{-1.71} \pm 0.05$  & $ \text{-1.73} \pm 0.06$  & $ \text{-1.73} \pm 0.05$  & $ \text{-1.71} \pm 0.05$ \\ 
$\log_{10}(a_1)$ & -2.00 & $ 0.10 \pm 0.3 $ & $ \text{-1.98} \pm 0.05$  & $ \text{-1.98} \pm 0.06$  & $ \text{-1.99} \pm 0.05$  & $ \text{-1.98} \pm 0.05$ \\ 
$\log_{10}(b_1)$ &  0.40 & $ 0.10 \pm 0.3 $ & $ \text{0.40} \pm 0.01$  & $ \text{0.40} \pm 0.01$  & $ \text{0.40} \pm 0.01$  & $ \text{0.40} \pm 0.01$ \\ 
$\log_{10}(k_2)$ &  1.00 & $ 10.00 \pm 0.3 $ & $ \text{0.98} \pm 0.01$  & $ \text{0.99} \pm 0.01$  & $ \text{0.98} \pm 0.01$  & $ \text{0.98} \pm 0.01$ \\ 
$\log_{10}(\gamma_2)$ & -1.30 & $ 0.10 \pm 0.3 $ & $ \text{-1.34} \pm 0.03$  & $ \text{-1.34} \pm 0.04$  & $ \text{-1.34} \pm 0.03$  & $ \text{-1.34} \pm 0.03$ \\ 
$\log_{10}(a_2)$ & -1.30 & $ 0.10 \pm 0.3 $ & $ \text{-1.35} \pm 0.06$  & $ \text{-1.34} \pm 0.07$  & $ \text{-1.36} \pm 0.06$  & $ \text{-1.35} \pm 0.06$ \\ 
$\log_{10}(b_2)$ &  0.48 & $ 0.10 \pm 0.3 $ & $ \text{0.48} \pm 0.01$  & $ \text{0.48} \pm 0.01$  & $ \text{0.48} \pm 0.01$  & $ \text{0.48} \pm 0.01$ \\ 
\bottomrule\end{tabular}
  \caption{Model parameters in the repressilator example. The second column presents the parameters of the prior distribution, where we use a Gaussian prior in the $\log_{10}$-transformed parameter space with a diagonal covariance matrix. The last four columns present the posterior mean and standard deviation of model parameters estimated using four methods: fixed-fidelity ST-MCMC (Fixed), Multifidelity ST-MCMC with ESS-Bridging, Multifidelity ST-MCMC with IT-Bridging, and Multifidelity ST-MCMC with $\beta$-tuning and IT-Bridging.}
  \label{table:repressilator_posterior}
\end{table}

\begin{figure}[H]
  \centering
  \includegraphics[scale=0.75]{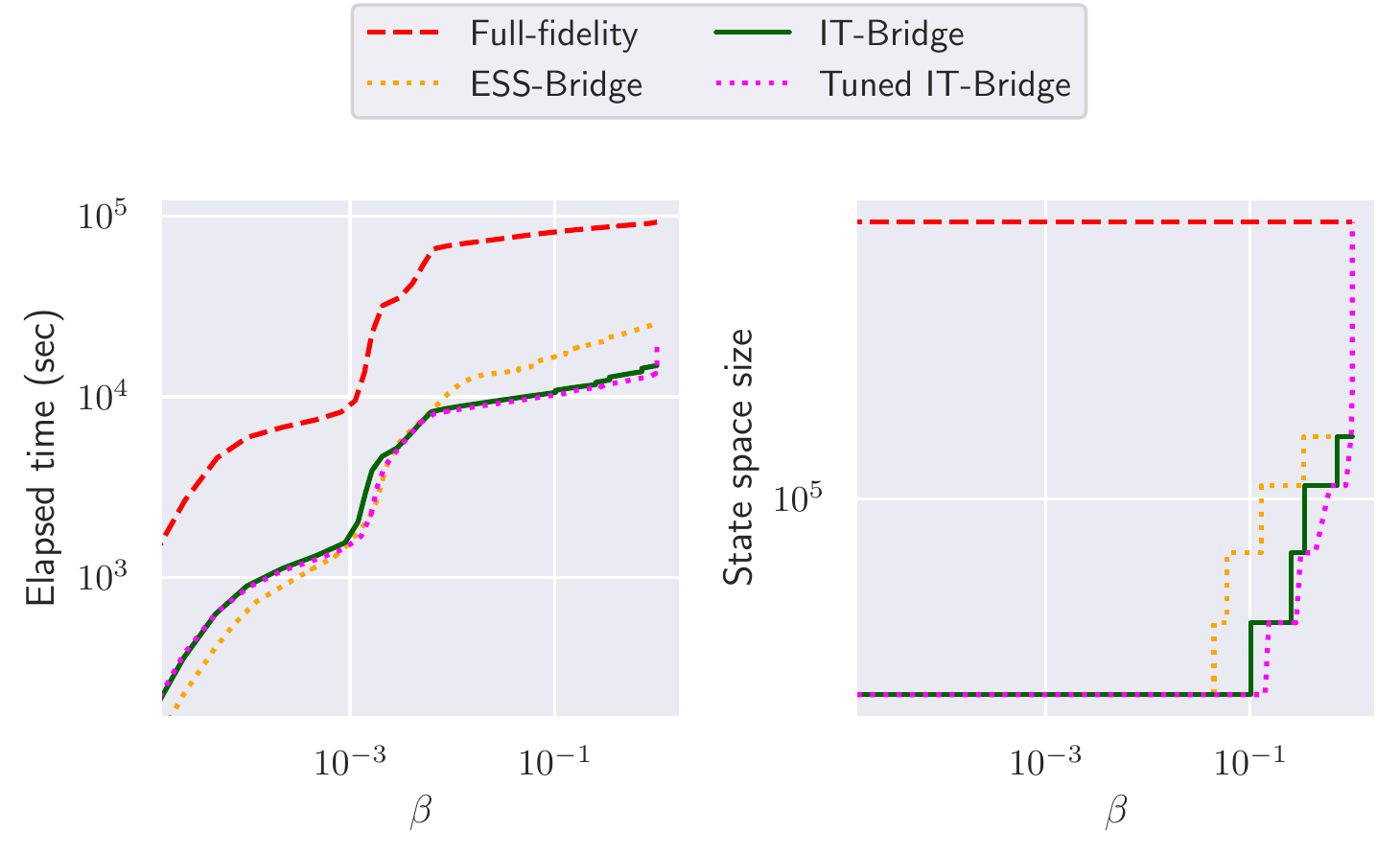}
  \caption{Performance of ST-MCMC samplers on the repressilator example. The horizontal axis represent the annealing factor, i.e. inverse temperature. Significant speed up is observed for the multifidelity schemes.}
  \label{fig:repressilator_performance}
\end{figure}

\begin{figure}
\centering
\includegraphics[scale=0.85]{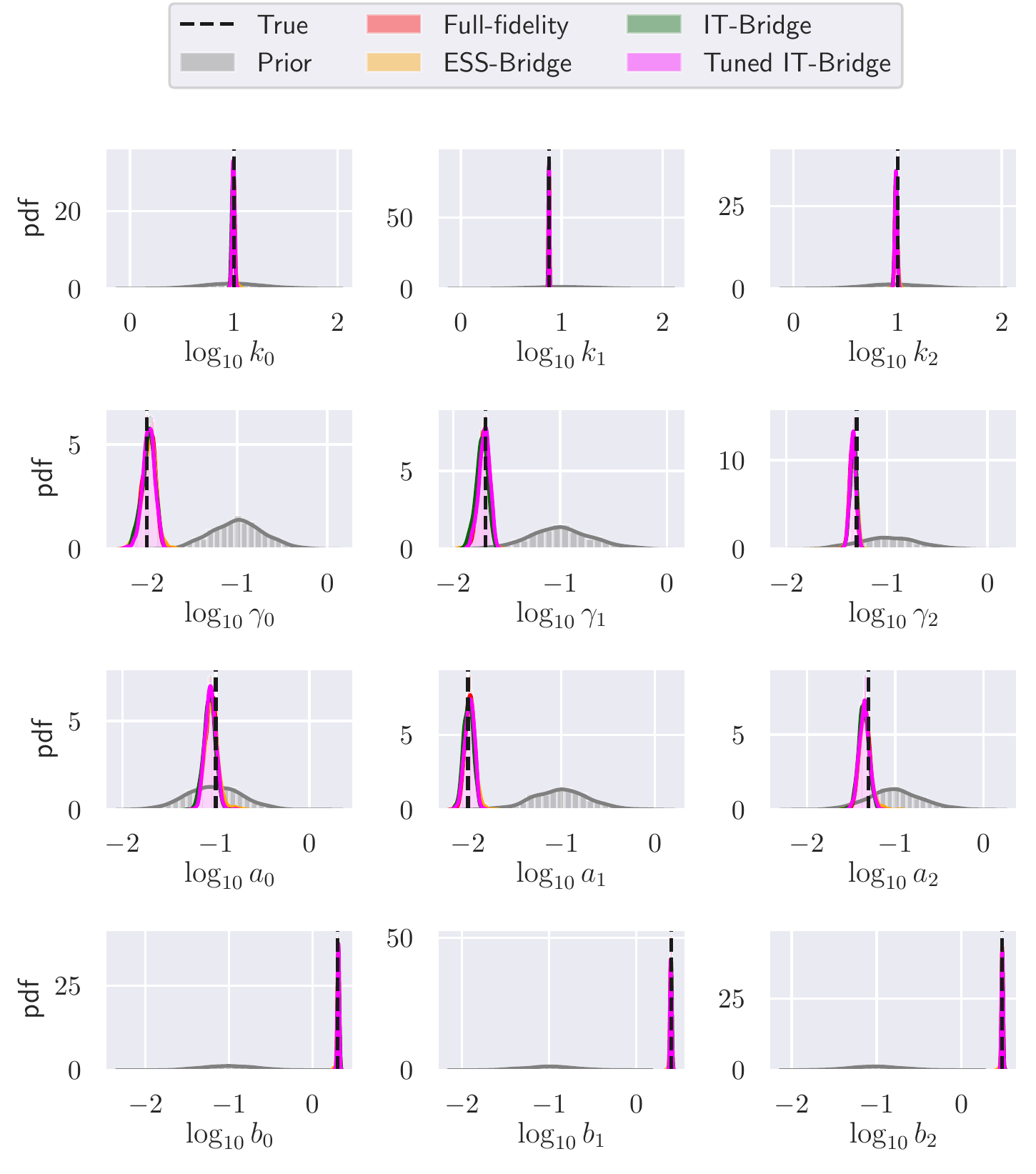}
\caption{Prior and posterior densities in the repressilator example.
 See table~\ref{table:repressilator_posterior} for the numerical values of the estimated means and standard deviations of these posterior distributions. It is evident that all methods converge to virtually the same distribution.}
  \label{fig:repressilator_posteriors}
\end{figure}

\begin{figure}
  \centering
  \includegraphics[scale=0.85]{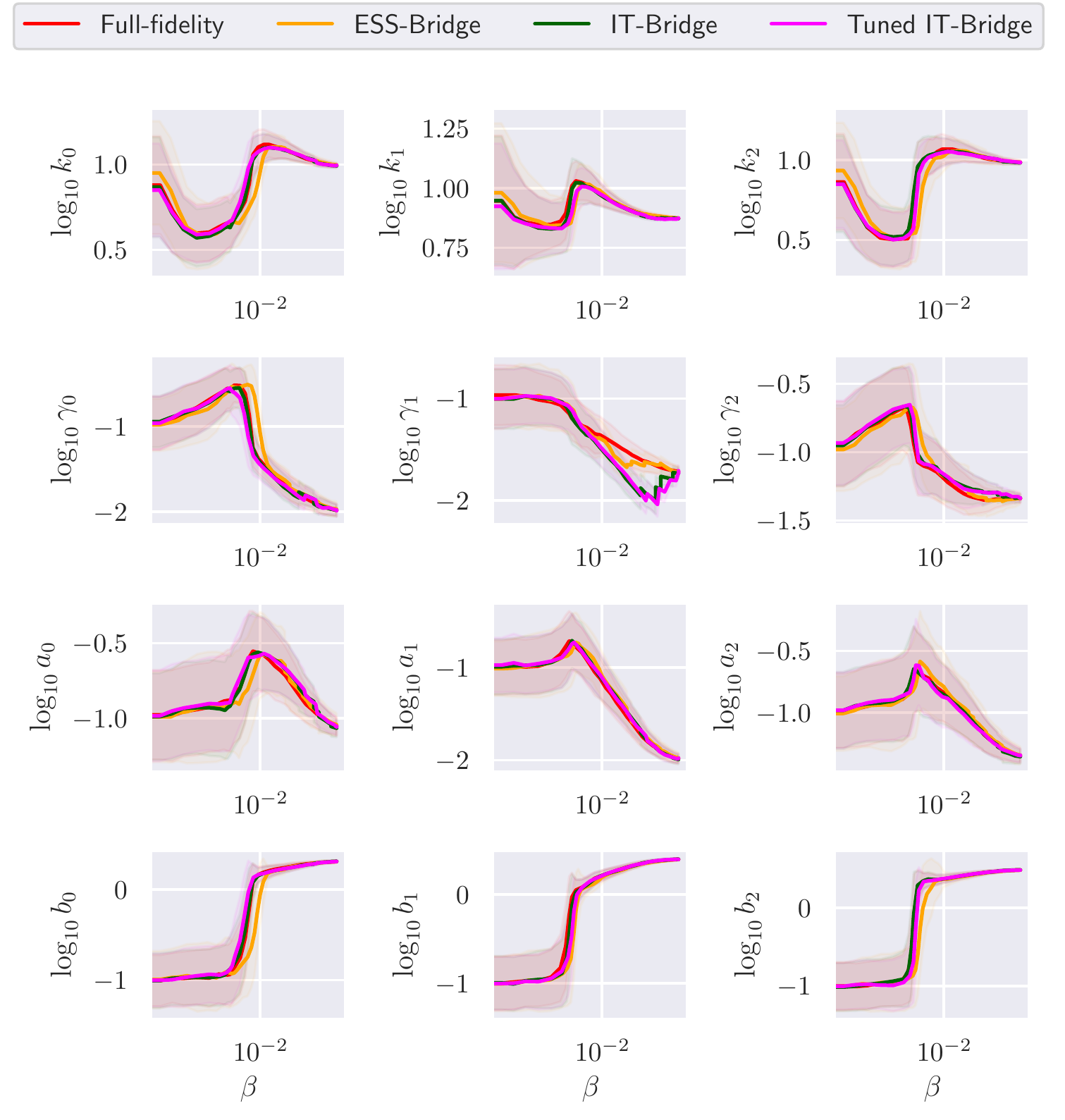}
  \caption{Evolution of the population of samples for the repressilator model parameters using four different ST-MCMC variants: full-fidelity, multifidelity strategies with bridging based on ESS, Information Theoretic Criteria and Tuned Information Theoretic Criteria. The solid lines represent the history of the sample means. The area of the mean $\pm$ standard deviation is presented in the shaded region. Notice in the $\gamma_1$ parameter that bias starts to accumulate for the IT-based methods. This bias is corrected when the sampler starts bridging since the bias began to exceed the natural parameter variability. }
  \label{fig:repressilator_sampling_history}
\end{figure}

\subsection{Bayesian comparison of comparmental models of gene expression}

\begin{table}
  \centering
  \begin{tabular}{llr}
      \toprule
      reaction index & reaction & propensity \\
      \midrule
      $1,\ldots,n_{G} $
      & $G_{i-1} \rightarrow G_{i}$, $i=1,\ldots, n_{G}-1$
      & $k_{i-1}^{+}[G_{i-1}]$
      \\
      $n_{G}+1,\ldots,2n_{G}$
      & $G_{i} \rightarrow G_{i-1}$, $i=1,\ldots,n_{G}-1$
      & $k_{i}^{-}[G_{i}]$
      \\
      $2n_{G}+1, \ldots, 3n_{G}-1$
      & $G_i \rightarrow G_i + \text{RNA}_{nuc}$ , $i = 1,\ldots, n_{G}-1$
      & $r_{i}[G_i]$
      \\
      $3n_{G}$ & $\text{RNA}_{nuc}\rightarrow \text{RNA}_{cyt}$
      & $k_{trans}[\text{RNA}_{nuc}]$
      \\
      $3n_{G}+1$ & $\text{RNA}_{cyt} \rightarrow \emptyset$ & $\gamma[\text{RNA}_{cyt}]$
      \\
      \bottomrule
  \end{tabular}
  \caption{
    Reactions and propensities in the compartmental gene expression model.
  }
  \label{table:gene_expression_reactions}
\end{table}

We next explore the application of multifidelity ST-MCMC to the problem of model selection.
We consider a class of compartmental multi-state gene expression models based on the model considered in~\cite{Munsky2018}.
The model separates biomolecules into the nuclear and cytoplasmic compartments.
The reaction network consists of a gene that could switch between an inactivated state $G_0$ and several activated states $G_i$, $i=1,\ldots,n_{G}-1$. When activated, these gene can be transcribed into RNA molecules within the nucleus at the rate of $r_{i}$ molecule/minute on average.
These nuclear mRNA molecules are then transported into the cytoplasm at a rate of $k_{trans}$ molecule/min, where they degrade at the probabilistic rate $\gamma$ molecule/minute.
Overall, the model consists of $n_{G} + 2$ species: genes that are at different states, nuclear mRNA and cytoplasmic mRNA. These molecular species that can go through $3n_G + 1$ reaction channels~(Table~\ref{table:gene_expression_reactions}).
Only the copy numbers of the nuclear and cytoplasmic mRNA species are observable in experiments. We want to use model selection to decide the number $n_{G}$ of gene states that best explain the observed data.

We simulate a ground truth dataset based on the model with $n_{G}:=3$, which consists of $1000$ single-cell measurements for each time point $t\in \{ 2, 4, 6, 8, 10 \}$ (with hour as time unit). We then use the multifidelity ST-MCMC using the information theoretic criteria with $\beta$-tuning to estimate the model evidence for three classes of reaction networks that consist of two, three, and four gene states and compare these results with the model evidence founding using the full model. We choose the information theoretic criteria with $\beta$-tuning over the other multifidelity approaches because it is the most robust and uses the $\beta$-tuning to avoid sampling degeneracy when the model fidelity changes. This is very important for computing model evidence because the model evidence estimate error is related to the KL-divergence between the intermediate distributions. The 2 and 3 state model were run using 7 nodes with 36 cores, while the 4 state model was run on 14 nodes also with 36 cores each. Each ST-MCMC used 1008 parallel samples. For each level, chains were run using MCMC until a correlation target of $0.6$ was reached or $100$ iterations exceeded. These results are summarized in Table~\ref{table:gene_evidence}. The model evidence ranges are computed using the approach described in~\cite{CALDERHEAD20094028}.

From our results, we observe that not only does the Tuned-IT multifidelity ST-MCMC provide consistent estimate of the model evidence compared to the full-model based ST-MCMC, it actually predicts less error. All the while taking less time with speed up factors of about $1.6$, $3.8$ and $3.2$ for the 2, 3, and 4 gene state model respectively. The improved estimate of the multifidelity approach is likely due to the fact that it uses more intermediate levels so it has a finer discretionary of the thermodynamic integration used to estimate the evidence.

The computed evidence indicates that the present data does not significantly favor one model choice over others. Clearly more experiments are required to provide conclusive evidence for model selection, and the multifidlelity framework allows us to realize the insufficiency of data faster than the full-fidelity scheme. This is an important advantage in practice, as the faster assessment of current experimental data will likely reduce the lag time between consecutive batches of experiments. Further it may be possible to integrate the same multifidelity ST-MCMC based approach into Bayesian experimental design to speed up estimating the expected information gain from various experimental setups in order to design experiments to better discriminate between the models.

\begin{table}
  \centering
    \begin{tabular}{l|rr|rr} 
\toprule\makecell[c]{Model} & \multicolumn{2}{c}{Full-fidelity} & \multicolumn{2}{c}{Tuned IT Bridge} \\
& Log Evidence & Time (Sec)  & Log Evidence & Time (Sec)  \\ 
\hline 2 Gene & $-20108.8 \pm  5.4$ & $1244$  & $ -20112.6 \pm 2.0$  & $ 758$  \\ 
          3 Gene & $-20111.9 \pm  5.6$  & $67496$  	 & $ -20115.7 \pm 2.0$   & $ 17511$  \\ 
          4 Gene & $-20113.0 \pm  5.7$  & $76546$  		 & $ -20117.5 \pm 1.8$   & $ 23777$  \\  
\bottomrule\end{tabular}
  \caption{Comparison of the model evidence computation for the 2,3, and 4 state gene expression model using ST-MCMC with the full fidelity and Multifidelity ST-MCMC with $\beta$-tuning. The evidence estimates from the full-fidelity and multifidelity methods are consistent, but the multifidelity scheme is significantly faster. }
  \label{table:gene_evidence}
\end{table}

\subsection{Stochastic transcription of the inflammation response gene IL1beta}

\begin{table}
  \centering
  \begin{tabular}{lll}
      \toprule
      & reaction & propensity \\
      \midrule
      1.
      & $G_0 \rightarrow G_1$
      & $k_{01}[G_0]$
      \\
      2.
      & $G_{1} \rightarrow G_{2}$
      & $k_{12}[G_1]$
      \\
      3.
      &
      $G_{2} \rightarrow G_{1}$
      & $k_{21}[G_2]$
      \\
      4.
      &
      $G_{1} \rightarrow G_{0}$
      &
      $k_{10}(t) = \max\{0, a_{10} - b_{10}S(t)\}$, see eq.\eqref{eq:il1b_signal}
      \\
      5.
      &
      $\emptyset \rightarrow \text{RNA}$
      &
      $\alpha_1[G_1] + \alpha_2[G_2]$
      \\
      6.
      &
      $\text{RNA} \rightarrow \emptyset$
      &
      $\gamma[\text{RNA}]$
      \\
      \bottomrule
  \end{tabular}
  \caption{
    Reactions and propensities in the IL1beta model.
  }
  \label{table:il1b_reactions}
\end{table}

Having explored the performance of the Multifidelity ST-MCMC schemes with FSP on theoretical examples with simulated datasets, we apply our method on modeling real datasets.
We consider the expression of the IL1beta gene in response to LPS stimulation that was studied in Kalb et al.~\cite{Kalb2019}.
The dataset consists of mRNA counts for IL1beta measured right before applying LPS stimulation, as well as those at $[0.5, 1, 2, 4]$ hours after.
We consider a three-state gene expression model with a time-varying deactivation rate. This results in a chemical reaction network with time-varying propensities and eleven uncertain parameters (Table~\ref{table:il1b_reactions}).
We assume the initial state $(2,0,0,0)$. The observed mRNA counts are fit to the solutions of the CME at times $T_0 + \{0, 0.5, 1, 2, 4\}$ hour, where the time offset $T_0$ is to be estimated.
The influence of LPS-induced signaling molecules is modeled by the function of the form
\begin{equation}
  \label{eq:il1b_signal}
  S(t) = \max\left\{0, \exp\left(-r_1 (t - T_0)\right)\left(1 - \exp\left(-r_2 (t-T0)\right)\right) \right\}.
\end{equation}
This signal affects the rate by which the gene turns off,
$$
k_{1,0}(t) = \max\{0, a_{10} - b_{10}S(t)\}.
$$
Similar to the previous example, the gene state is hidden and the data only contains measurements of the mRNA copy numbers.

Fig.~\ref{fig:il1b_performance} summarizes the performance of the four sampling schemes.
The full-fidelity model considered in this example has a state space of only $18,432$ states, and could be solved quickly without any reduction scheme. Yet, we still observe significant speedup from the Multifidelity schemes.
Specifically, the Multifidelity ST-MCMC with ESS, Information Theoretic and Tuned Information Theoretic Bridging took respectively 3153, 2384, and 2703 seconds to finish, with speedup factors of 1.7, 2.2, and 2.0 over the full-fidelity scheme that takes over 5468 seconds. In Figure~\ref{fig:il1b_evo}, we see the evolution of the model parameters for the different methods and we can use this to better understand differences from Figure~\ref{fig:il1b_performance}. It took significantly longer for the ESS-based method to bridge to  fidelity models than the IT-based methods. However, when it did bridge it jumped straight to the highest model fidelity. For several parameters, we see that their evolution under the ESS-based scheme accumulated significant bias at lower $\beta$ levels before being corrected when the ESS sampler started bridging much later on (e.g. parameters $r_1$, $k_{01}$, and $T_0$). Furthermore, once bridging occurred the distributions were very far apart from each other so the sample population degenerated. This can be seen in the bias that occurs in parameters $r_2$, $b_{10}$ and $\alpha_1$ immediately after bridging. The fact that the ESS-based method does not use any information from the full posterior explains this delay and degeneracy as it is unable to detect the emergence of bias. In contrast, the IT-based methods use the guidance of full model evaluations to better recognize the emergence of bias, so they correct it quicker. Therefore, the IT-based schemes have a smoother evolution and as a result take less time. However, we do observe bias occur in $r_2$ after one bridging step for the IT-Bridge without $\beta$ tuning, indicating some sampling degeneracy. This is not the case for IT-Bridge with $\beta$ tuning since it is specifically designed to avoid degeneracy.

All sampling schemes use essentially the same posterior estimates for the model parameters (Table~\ref{table:il1b_posterior}). Despite significant posterior variance for some parameters, the Bayesian prediction for the distributions of RNA copy number has negligible uncertainties, and they appear to correspond reasonably well with the experimental data at the beginning and the end of the measurement time period (Fig.~\ref{fig:il1b_prediction}).
We notice that this is not necessarily the only model structure that could explain the data, and there may yet be other models that could fit and predict single-cell behavior more accurately.
The speedup enabled by the Multifidelity framework will allow the researcher the ability to more rapidly propose, assess, and choose between different alternative models.

\begin{table}
  \centering
    \begin{tabular}{l|r|rrrr} 
\toprule\makecell[c]{Parameter}& \makecell[c]{Prior} & \multicolumn{4}{c}{Posterior} \\&& Full-fidelity& ESS-Bridge& IT-Bridge& Tuned IT-Bridge\\ 
\hline$\log_{10}(r_1)$ & $-2.00 \pm  0.33$ & $ \text{-2.48} \pm 0.03$  & $ \text{-2.47} \pm 0.03$  & $ \text{-2.47} \pm 0.03$  & $ \text{-2.47} \pm 0.03$ \\ 
$\log_{10}(r_2)$ & $-2.00 \pm  0.33$ & $ \text{-2.00} \pm 0.34$  & $ \text{-2.00} \pm 0.33$  & $ \text{-2.00} \pm 0.33$  & $ \text{-2.00} \pm 0.32$ \\ 
$\log_{10}(k_{{01}})$ & $-3.00 \pm  0.33$ & $ \text{-3.26} \pm 0.03$  & $ \text{-3.25} \pm 0.04$  & $ \text{-3.25} \pm 0.03$  & $ \text{-3.25} \pm 0.03$ \\ 
$\log_{10}(a_{{10}})$ & $-2.00 \pm  0.33$ & $ \text{-1.31} \pm 0.06$  & $ \text{-1.31} \pm 0.06$  & $ \text{-1.30} \pm 0.06$  & $ \text{-1.31} \pm 0.06$ \\ 
$\log_{10}(b_{{10}})$ & $ 3.00 \pm  0.33$ & $ \text{3.04} \pm 0.34$  & $ \text{3.05} \pm 0.34$  & $ \text{3.06} \pm 0.30$  & $ \text{3.04} \pm 0.31$ \\ 
$\log_{10}(k_{{12}})$ & $-3.00 \pm  0.33$ & $ \text{-3.08} \pm 0.03$  & $ \text{-3.08} \pm 0.04$  & $ \text{-3.08} \pm 0.04$  & $ \text{-3.08} \pm 0.03$ \\ 
$\log_{10}(k_{{21}})$ & $-2.00 \pm  0.33$ & $ \text{-1.25} \pm 0.15$  & $ \text{-1.26} \pm 0.14$  & $ \text{-1.28} \pm 0.14$  & $ \text{-1.28} \pm 0.13$ \\ 
$\log_{10}(\alpha_1)$ & $-3.00 \pm  0.33$ & $ \text{-3.60} \pm 0.15$  & $ \text{-3.60} \pm 0.16$  & $ \text{-3.60} \pm 0.16$  & $ \text{-3.60} \pm 0.16$ \\ 
$\log_{10}(\alpha_2)$ & $ 0.00 \pm  0.33$ & $ \text{0.71} \pm 0.15$  & $ \text{0.70} \pm 0.14$  & $ \text{0.68} \pm 0.13$  & $ \text{0.68} \pm 0.13$ \\ 
$\log_{10}(\gamma)$ & $-4.00 \pm  0.33$ & $ \text{-4.56} \pm 0.05$  & $ \text{-4.56} \pm 0.05$  & $ \text{-4.56} \pm 0.05$  & $ \text{-4.56} \pm 0.05$ \\ 
$\log_{10}(T_0)$ & $ 4.00 \pm  0.33$ & $ \text{5.36} \pm 0.09$  & $ \text{5.36} \pm 0.09$  & $ \text{5.37} \pm 0.08$  & $ \text{5.37} \pm 0.09$ \\ 
\bottomrule\end{tabular}
  \caption{Model parameters in the IL1beta example. The second column presents the parameters of the prior distribution, where we use a Gaussian prior in the $\log_{10}$-transformed parameter space with a diagonal covariance matrix. The last four columns present the posterior mean and standard deviation of model parameters estimated using the ST-MCMC with full-fidelity model and the Multifidelity ST-MCMC with three different bridging strategy.}
  \label{table:il1b_posterior}
\end{table}

\begin{figure} [H]
  \centering
  \includegraphics[scale=0.75]{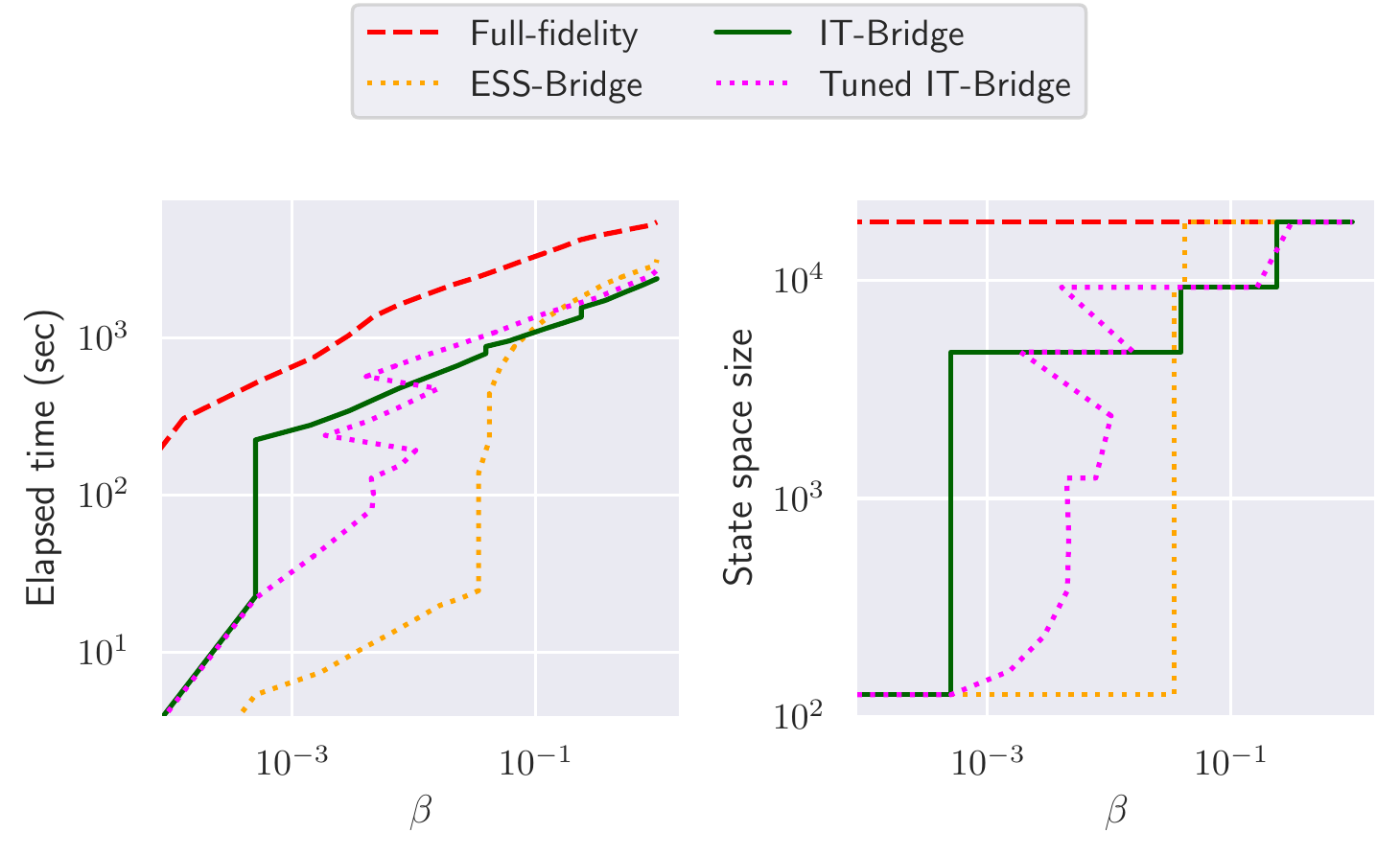}
  \caption{Performance of STMCMC samplers on the IL1beta example. The horizontal axis represents the inverse temperature.}
  \label{fig:il1b_performance}
\end{figure}

\begin{figure}
  \centering
  \includegraphics[scale=0.85]{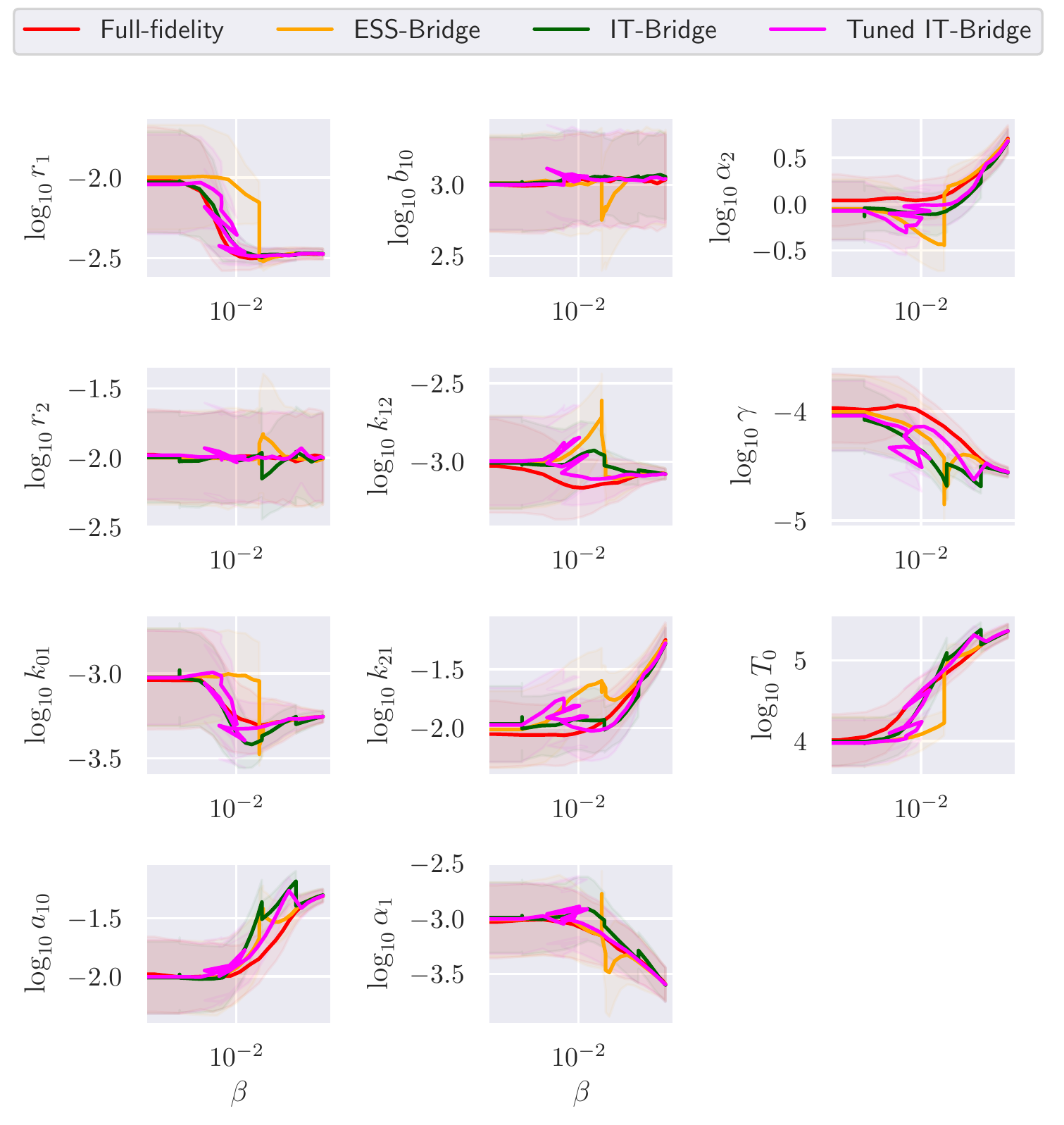}
  \caption{Evolution of the population of samples for the IL1beta model parameters using four different ST-MCMC variants: full-fidelity, multifidelity strategies with bridging based on ESS, Information Theoretic Criteria and Tuned Information Theoretic Criteria. The solid lines represent the history of the sample means. The area of the mean $\pm$ standard deviation is presented in the shaded region.}
  \label{fig:il1b_evo}
\end{figure}

\begin{figure}
  \centering
  \includegraphics[scale=0.75]{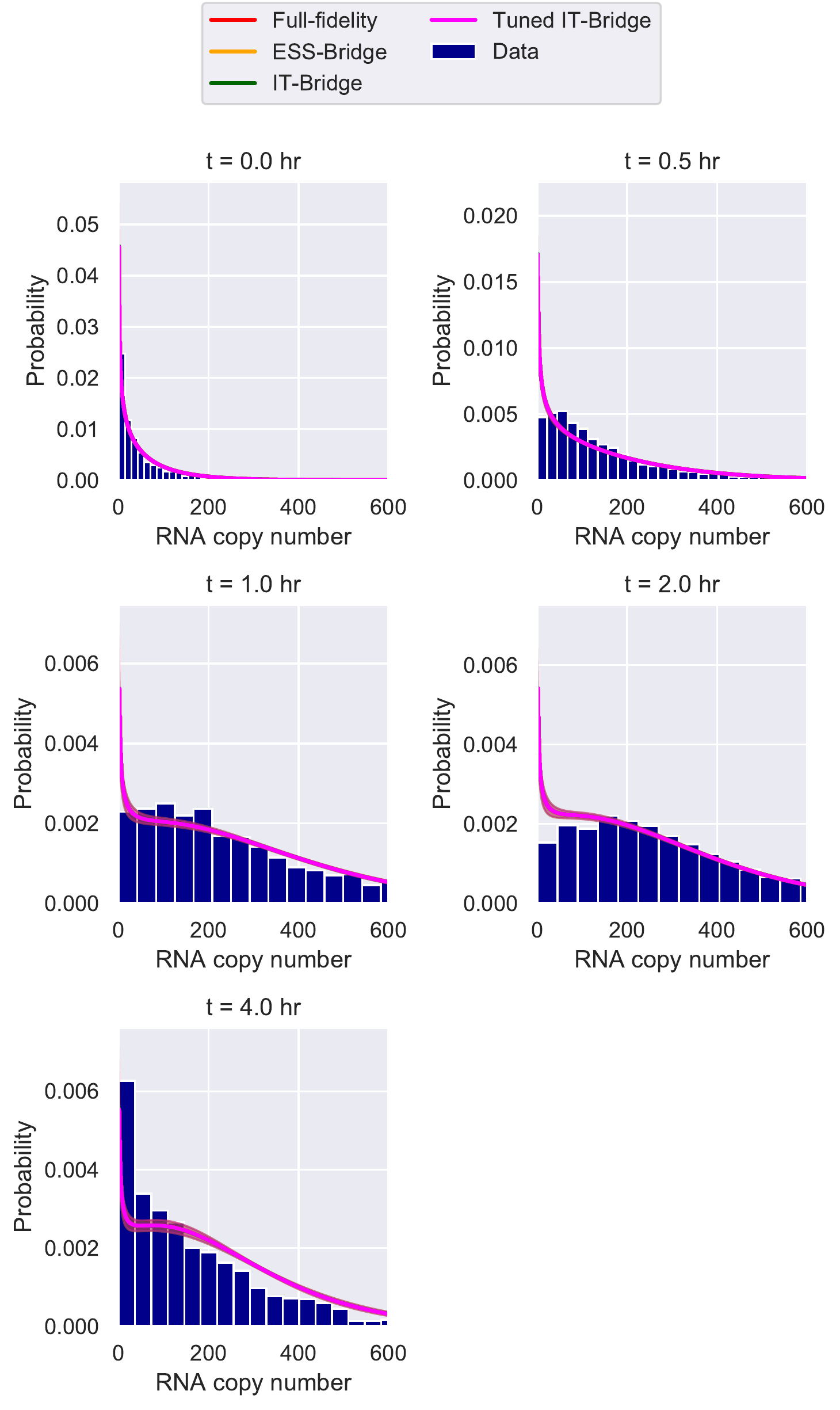}
  \caption{Comparison of data and the posterior mRNA distribution predictions for the IL1beta transcription model at zero and four hour after LPS induction. The mean Bayesian prediction for the mRNA probability distribution is computed by averaging the solution of the CME over all posterior samples. The area of one standard deviation around the mean is shown in shade. Visually speaking, samples from different ST-MCMC formulations yield identical predictions.}
  \label{fig:il1b_prediction}
\end{figure}

\section{Conclusion}
Rapid advancements in experimental techniques are allowing biologists to collect quantitative data about cellular processes at ever smaller scales with increasing detail~\cite{Raj2008, Munsky2015, Li2019}.
Mathematical models have become an indispensable part in the process of learning and making predictions from this data.
Stochastic reaction networks (SRNs) form a powerful class of models that have found widespread use within the quantitative biology community~\cite{Munsky2015}.
Identifying these models from the data, however, is a challenging task due to the computational cost of solving the chemical master equation (CME).
This has prevented a fully Bayesian statistical framework from being adopted widely in real biological studies.
In this paper, we seek to address the challenge of applying the Bayesian philosophy to analyzing stochastic gene expression data by proposing an efficient computational framework for Bayesian parameter calibration and model selection for SRNs. This framework combines novel multifidelity formulations of the massively parallel ST-MCMC sampler with surrogate models of the CME.
Numerical tests demonstrate that this combined approach leads to significant savings in comparison to a state-of-the-art method that uses solely the high-fidelity models. Further, we also propose a new criteria for tuning model fidelity within multifidelity SMC type methods based on information theory that compares favorably to effective sample size based techniques.

The research reported here may potentially lead to fruitful future directions. With respect to surrogate models, the approach proposed here for the efficient solution of the surrogate master equations is only one among various alternatives that have been proposed over the years since the introduction of the FSP algorithm~\cite{Munsky2006}. Another attractive option for constructing multifidelity models is to utilize a low-rank tensor format such as the quantized tensor train that has been proposed for the forward solution of the CME~\cite{Kazeev2013, Kazeev2015, Dolgov2014, Dolgov2015, Vo2017c, Dolgov2018}. It is also possible to exploit bounds on the log-likelihood function as done in Fox et al.~\cite{Fox2016}.

The improved efficiency may lead to more widespread adoptions of the Bayesian approach in answering biological questions.
We refer to Catanach et al.~\cite{Catanach2018Context} for an example of a Bayesian approach to studying the phenomenon of context dependence in synthetic gene circuits using Bayesian model selection, which required significant computational resources.

There are also many directions for improving Multifidelity ST-MCMC in general. We expect estimating the information gain criteria could be significantly improved. One possibility is using a more advanced sampling scheme that leverages model evaluations from across the multifidelity hierarchy. This could even further reduce the number of full model evaluations needed at each level. Another avenue of research is designing Multifideltiy ST-MCMC specifically for estimating a quantify of interest to a given accuracy as is done in Multilevel MCMC. If we have a design object, ST-MCMC may not need to progress through the full model hierarchy or all annealing levels in order to provide enough information to estimate the quantify of interest to the desired accuracy. By further reducing the computational cost of Bayesian methods like ST-MCMC, engineers and scientist will be better able to integrate uncertainty quantification into their workflow. Therefore, as high performance computing resources are becoming increasingly accessible, we expect the Multifidelity ST-MCMC framework to provide a useful tools for researchers who are interested in model calibration and uncertainty propagation for complex models.

\label{sec:conclusion}
\section*{Acknowledgement}
We thank James Werner and Daniel Kalb for kindly sharing with us the data from their smFISH experiment. The cited work~\cite{Kalb2019} was performed, in part, at the Center for Integrated Nanotechnologies, an Office of Science User Facility operated for the U.S. Department of Energy (DOE) Office of Science by Los Alamos National Laboratory (Contract 89233218CNA000001) and Sandia National Laboratories (Contract DE-NA-0003525). The work presented here was also funded in part by the Department of Energy Office of Advanced Scientific Computing Research through the John von Neumann Fellowship. Sandia National Laboratories is a multimission laboratory managed and operated by National Technology and Engineering Solutions of Sandia, LLC., a wholly owned subsidiary of Honeywell International, Inc., for the U.S. Department of Energy’s National Nuclear Security Administration under contract DE-NA-0003525. This paper describes objective technical results and analysis. Any subjective views or opinions that might be expressed in the paper do not necessarily represent the views of the U.S. Department of Energy or the United States Government. SAND2019-15382 J.

\noindent We also thank Ania-Ariadna Baetica for providing constructive comments on the manuscript and Jed Duersch for discussions regarding information theory.

\bibliographystyle{preprint}
\bibliography{library}
\end{document}